\newcommand{\vast}{\bBigg@{4}}
\newcommand{\Vast}{\bBigg@{5}}
\newtheorem{theorem}{Theorem}[]
\newtheorem{lemma}[]{Lemma}
\theoremstyle{definition}
\newtheorem{defn}{Definition}[]
	\title{Asymptotic maximum order statistic for SIR in $\kappa-\mu$ shadowed fading}
\author{Athira Subhash, Muralikrishnan Srinivasan, Sheetal Kalyani \\
	\hspace{-0.5 cm}Department of Electrical Engineering,\\
	\hspace{-0.8 cm} Indian Institute of Technology, Madras, \\
	\hspace{-1cm} Chennai, India 600036.\\
	\hspace{-1cm} \{ee16d027@smail,ee14d206@ee,skalyani@ee\}.iitm.ac.in\\
\thanks{The authors are with the Department of Electrical Engineering, Indian Institute of Technology Madras,  Chennai, India 600036 (email:\{ee14d206@ee,ee16d027@smail,skalyani@ee\}.iitm.ac.in). }
\thanks{Copyright(c) 2019 IEEE. Personal use is permitted. For any other purposes, permission must be obtained from the IEEE by emailing pubs-permissions@ieee.org. This is the author’s version of an article that has been published in this journal. Changes were made to this version by the publisher prior to publication. The final version of record is  \href{https://ieeexplore.ieee.org/abstract/document/8745482}{available here}}
}
\begin{document}
\bstctlcite{IEEEexample:BSTcontrol}

\maketitle
\vspace{-2mm}
\begin{abstract}
	Using tools from extreme value theory (EVT), it is proved that, when the user signal and the interferer signals undergo independent and non-identically distributed (i.n.i.d.) $\kappa-\mu$ shadowed fading, the limiting distribution of the maximum of $L$ independent and identically distributed (i.i.d.) signal-to-interference ratio (SIR) random variables (RVs) is a Frechet distribution. It is observed that this limiting distribution is close to the true distribution of maximum, for maximum SIR evaluated over moderate $L$. Further, moments of the maximum RV is shown to converge to the moments of the Frechet RV. Also, the rate of convergence of the actual distribution of the maximum to the Frechet distribution is derived and is analyzed for different $\kappa$ and $\mu$ parameters. Finally,  results from stochastic ordering are used to analyze the variation in the limiting distribution with respect to the variation in source fading parameters. These results are then used to derive upper bound for the rate in Full Array Selection (FAS) schemes for antenna selection and the asymptotic outage probability and the ergodic rate in maximum-sum-capacity (MSC) scheduling systems.
\end{abstract}
\vspace{2cm}
\begin{IEEEkeywords}
Selection combining, extreme value theory, $\kappa-\mu$ shadowed fading, outage probability, rate of convergence
\end{IEEEkeywords}

	\section{Introduction}
	\par Massive multiple input multiple output (MIMO) system has been widely accepted as a key to meet the increasing demand for wireless throughput in 5G systems \cite{marzetta2015massive}. With the deployment of massive MIMO, one can expect transmitters/receivers with hundreds of antennas available for communication simultaneously. Larsson et al. in \cite{larsson2014massive} show that the uplink spectral efficiency and the radiated power efficiency shall increase by 100 times with massive MIMO technology (with 100 antennas) and appropriate signal processing techniques. There are several examples in literature in which massive MIMO scenarios have more than 100 antennas. Works such as \cite{vieira2014flexible,guan2017flexible} present various simulation results for realizing massive MIMO in a practical test-bed. Similarly, \cite{gao2013antenna, gao2015massive,bjornson2016massive,hanif2018antenna, gao2016antenna}  present analysis of different massive MIMO systems with over 100 antennas.
	\par One of the impediments to a dense deployment of cellular networks, especially in MIMO systems is the co-channel interference (CCI), which is caused by sharing of common system resources by multiple users and by frequency reuse among adjacent cells. Therefore, the effect of CCI on the quality of the wireless link has to be studied extensively before cell-planning and employing interference mitigation techniques. A vast amount of attention and research in literature is devoted to the study of signal to interference ratio (SIR). Given the fact that massive MIMO is a promising technology for future cellular scenarios, the evaluation of the maximum SIR statistics over all available antennas will be a useful metric for various performance analysis and other quality of service (QoS) provisioning applications. Very recently, \cite{gao2018massive} discussed bounds on the rate of full antenna selection (FAS) architecture in a massive MIMO system, using statistics of the maximum SIR in a Rayleigh fading scenario. Similarly, the maximum SIR is an important performance metric in multi-user shared networks, where user scheduling is based on the channel conditions of the users. For example, the authors of \cite{sibomana2016ergodic} derive analytical expressions of the ergodic capacity for max-signal-to-interference-plus-noise-ratio (Max-SINR) scheduling system in a cognitive radio network. Also, using the $k$th order statistics of the user SIR, \cite{al2018asymptotic} analyzes the asymptotic performance of a generalized multi-user diversity scheme of an underlay cognitive radio system in a Nakagami fading channel.
	\par The cumulative distribution function (CDF) of the maximum of independent random variables (RVs) is given by the product of CDF of each of the variables.  Hence, in the case of $L$ independent and identically distributed (i.i.d.) RV's, the CDF of the maximum is given by the $L$th power of the common CDF. In several cases, given that the CDF of a single RV can itself involve complicated functions, the CDF of order statistics like maximum and minimum, even over i.i.d. RVs will be more complicated. Also, providing a meaningful analysis for performance metrics like outage probability, the ergodic rate becomes intractable, if not impossible. In such cases, we can use extreme value theory (EVT) and propose a systematic approach to characterize the asymptotic maximum or minimum SIR in terms of simple probability distribution functions (PDF) or CDFs that are amenable to analysis. For example, works like \cite{jindal2006multicast, park2008multicast, park2009multicast} study the capacity limits of Rayleigh faded multi-cast channels using EVT, without which the capacity limits would have been intractable to analyze. EVT has also been used effectively for studying the asymptotic behaviour of performance metrics in opportunistic scheduling. For example, the limiting distribution of spectral efficiency for multi-hop relaying techniques employing opportunistic scheduling is analyzed in works such as \cite{oyman2007scheduling,oyman2008scheduling, oyman2010scheduling} using EVT. The ergodic capacity of opportunistic scheduling for a gamma-gamma composite fading channel is investigated in the work \cite{ahmadi2012scheduling}. The asymptotic distributions of metrics such as ergodic capacity, mutual information, end-to-end signal to noise ratio (SNR), ergodic secrecy rate (ESR) in a multi-relay setup are discussed in works such as \cite{kountouris2009scheduling, xue2010mi, xia2014scheduling, biswas2016relay, xu2016relay,sibomana2016ergodic,al2018asymptotic}. The asymptotic PDF of the maximum of i.i.d. sums of i.n.i.d. gamma RVs is shown to be a Gumbel PDF  in \cite{kalyani2012gamma}. SIR-based asymptotic throughput analysis for opportunistic scheduling of MIMO downlink systems for Rayleigh fading channels is performed in \cite{pun2011mimo}. Here, using EVT, the limiting distribution is found to be Frechet distribution. To the best of our knowledge, there is currently no work that gives results similar to \cite{pun2011mimo}, even for Rician or Nakagami fading channels. Also, the recent work \cite{gao2018massive}, derives the statistical upper channel capacity bounds for FAS systems using EVT in the large-scale limit only for Rayleigh fading channels. 
	\par In recent times, there has been a significant focus on generalized multipath fading models, first discussed in \cite{yacoub_k_mu}. These fading models called $\kappa-\mu$ and $\eta-\mu$ fading, model small-scale variations of the channel in the line of sight (LOS) and non-line of sight (NLOS) conditions, respectively. Further, these generalized fading distributions include Rayleigh, Rician, Nakagami-m, Nakagami-q and one-sided Gaussian distributions as special cases. To investigate shadowing of the dominant component, a shadowed Rice model with random LOS component is introduced in\cite{abdi_lms}. A further generalization of the shadowed Rician fading is the $\kappa-\mu$ shadowed fading, which has been studied in both \cite{paris2014statistical} and \cite{cotton_d2d}. Also, $\kappa-\mu$ shadowed fading has been shown to unify the $\kappa-\mu$ and $\eta-\mu$ fading models \cite{pozas_shadowed} and to have a wide variety of applications ranging from land-mobile satellite systems to device to device communication \cite{cotton_d2d}. 
	\par Performance metrics for $\kappa-\mu$ shadowed fading have been studied extensively in works like \cite{celia2014capacity, zhang2015effective, chen2016outage, li2017rate, zhang2017hos, chandrasekaran2015performance,thomas2016error}. For example, the exact capacity and effective capacity expressions for $\kappa-\mu$ shadowed fading channel have been derived in \cite{celia2014capacity} and \cite{zhang2015effective} respectively. Expressions for the effective rate of MISO systems over $\kappa-\mu$ shadowed fading models have been derived in \cite{li2017rate}. However, all the above works either do not consider the impact of CCI or consider only Rayleigh faded interferers. There are works like \cite{morales2012outage, paris2013outage, ermolova2014outage, kumar2015coverage, kumar2015outage,zhang2017performance}, which consider CCI in a generalized fading setting and characterize the SIR. For example, outage probability expression for $\eta-\mu$ signal of interest (SOI) and Rayleigh faded interferers is derived in terms of confluent Lauricella function in \cite{morales2012outage}. Outage probability expressions, when SOI experiences $\eta-\mu$ or $\kappa-\mu$ fading and the interfering signals are subject to $\eta-\mu$ fading, have been derived in \cite{paris2013outage}. This was further extended to cases where CCI can be either $\eta-\mu$ or $\kappa-\mu$ fading in the presence of white Gaussian noise in \cite{ermolova2014outage}. Expressions for coverage probability and rate are derived in terms of Lauricella's function of the fourth kind in \cite{kumar2015coverage} when SOI and interferers experience $\kappa-\mu$ and $\eta-\mu$ fading respectively. Approximate outage probability and rate expressions are derived in terms of the Appell function in \cite{kumar2015outage}, when the user channel and the interferers experience $\kappa-\mu$ and $\eta-\mu$ fading respectively. 
	\par Though new, $\kappa-\mu$ shadowed fading has its fair share in the literature that characterize SIR. In \cite{parthasarathy2017coverage} coverage probability expressions are derived when the base stations are modeled as Poisson point process (PPP) and the channels experience $\kappa-\mu$ shadowed fading. Expressions for error vector magnitude (EVM) are derived in \cite{parthasarathy2018evm} for an interference-limited system when both the desired channel and interferers experience i.n.i.d. $\kappa-\mu$ shadowed fading. Approximate outage probability and capacity expressions are derived for $\kappa-\mu$ shadowed fading channels in \cite{kumar2015capacity}. Exact outage and rate expressions in the presence of CCI has been studied in \cite{kumar2017outage} only recently. 
	\par One thing that is common among \cite{chen2016outage}, \cite{morales2012outage, paris2013outage, ermolova2014outage, kumar2015coverage, kumar2015outage, parthasarathy2017coverage, parthasarathy2018evm, kumar2015capacity, kumar2017outage} is the complicated nature of the PDF and the CDF of SIR. For example, the recent work \cite{kumar2017outage}, which generalizes all existing results and considers the SOI and CCI to be i.n.i.d. $\kappa-\mu$ shadowed fading derives the CDF of SIR in terms of an infinite summation of the Lauricella function of the fourth kind. This Lauricella function itself involves N-fold infinite summation (Here, $N$ denotes the number of interferers). Now, determining the CDF of maximum over $L$ such i.i.d. SIR realizations involve raising the CDF to power $L$, making further mathematical analysis like computing rate very difficult.  Even the evaluation of the exact CDF of the maximum of two SIR RVs having two i.n.i.d interferers takes more than an hour to compute in Mathematica with the series expansion given by \cite[Eqn. (8)]{kumar2017outage}. Further, the evaluation of the exact CDF of the maximum of four SIR RVs with each SIR RV having four i.n.i.d interferers in a $\kappa-\mu$ shadowed fading environment times out in Mathematica. Therefore, a limiting distribution for the maximum of SIR RVs, which is not only easy to compute but is also amenable to mathematical analysis, will have significant utility. Also, such a distribution will easily extend the recent FAS results of \cite{gao2018massive} to a generalized fading scenario. Similarly, the authors of \cite{song2006asymptotic} and \cite{sibomana2016ergodic} discuss the performance analysis of a maximum-sum-capacity (MSC) scheduling system and a max-SINR scheduling system respectively in Rayleigh fading channels. A simple expression for the distribution of maximum SIR can generalize these results as well. Our major contributions in this paper are as follows:
	\begin{itemize}
	    \item Assuming that the user signal and the interferer signals undergo i.n.i.d. $\kappa-\mu$ shadowed fading, we prove, using tools from EVT, that the limiting distribution of the maximum of $L$ such i.i.d. SIR RVs is a Frechet distribution. We then prove the convergence of moments of the maximum RV to those of the limiting distribution.
	    \item We also derive the rate of convergence of the actual maximum distribution to the asymptotic distribution. This sheds light on how well the limiting distribution approximates the actual distribution for finite values of $L$ and $N$. In order to further demonstrate the practical validity of the work, we also study the empirical Kullback-Leibler (KL) divergence between the empirical maxima distribution and the derived asymptotic distribution. The KL divergence results indicate the quantitative closeness between the asymptotic results and the exact results, for finite $L$, whereas the rate of convergence results discusses the order of convergence. 
	    \item Further, we use results from stochastic ordering to analyze the variations in the asymptotic distribution of the maximum. This analysis will not be possible with the exact but complicated distribution of the maximum RV.
	    \item Finally, we analyze the utility of the derived asymptotic results in the following applications: 
	    \begin{enumerate}[label=\roman*]
	        \item Analysis of asymptotic outage probability and asymptotic ergodic rate of the user in each time slot of a MSC system.
	        \item Derivation of upper bound on the rate in FAS architectures for antenna selection in massive MIMO scenario.
	    \end{enumerate}
	\end{itemize} 
	Also, the above-mentioned results hold for Rayleigh, Rician, Nakagami-m, $\kappa-\mu$ and $\eta-\mu$ faded user and interferer scenarios since all of these are special cases of the $\kappa-\mu$ shadowed fading model. Since we assume i.n.i.d. interferers, we also account for interferers having different path-loss or having unequal powers. 
	
    \par The rest of the paper is organized in the following fashion. In Section \ref{evt}, we find the asymptotic distribution of the maximum SIR using tools from EVT. We also give brief notes on the convergence of moments and the rate of convergence. Further, in Section \ref{analysis}, we give an analysis of the asymptotic distribution and analyze the convergence of the true maxima distribution to the asymptotic results derived in terms of the empirical KL divergence. Then, in Section \ref{results}, we present three applications of the derived results and their corresponding simulations. Finally, we conclude the work in Section \ref{conclusion}.

\section{EVT based maxima of $L$ i.i.d. SIR RVs}\label{evt}
 Let $\mathlarger{\gamma^{L}}_{max}$ denote the maximum of $L$ i.i.d. SIR RVs, where the source and the interferers are assumed to experience i.n.i.d. $\kappa-\mu$ shadowed fading, i.e., $\gamma^L_{max} = max\{\gamma_1,\cdots,\gamma_L \}$ and $\gamma_j \sim F_\gamma(z), \ \forall \ j \in \{1,\cdots,L\}$. In this section, (a) the asymptotic distribution of $\gamma^{L}_{max}$ represented as ${F}_{{\gamma}^{}_{max}}(z)$ is derived, (b) convergence of the moments of ${F}_{{\gamma}^{}_{max}}(z)$ to the moments of the true maxima distribution $F_{\gamma^{L}_{max}}(z)$ is analyzed and (c) the rate of convergence of ${F}_{\gamma^{L}_{max}}(z)$ to ${F}_{{\gamma}^{}_{max}}(z)$ is derived.  

\subsection{Maximum of SIR RVs in $\kappa-\mu$ shadowed fading environment.} \label{sectionA}
We first prove that the CDF of maximum of $L$ i.i.d. SIR RVs converges to the CDF of a Frechet RV. For this, we make use of Fisher-Tippet theorem, which forms the corner-stone of EVT. The seminal theorem is as follows \cite{de2007extreme}:
\begin{theorem}  \label{fischer_tippet}
	Fisher-Tippet Theorem, Limit Laws for Maxima: \\
	Let $\mathlarger{z_1,z_2,\cdots,z_L}$ be a sequence of $L$ i.i.d. RVs and $\mathlarger{M_L=}$ max $\mathlarger{\{z_1,z_2,\cdots,z_L\} }$; if $\mathlarger{\exists}$ constants $\mathlarger{a_L>0}$ and $\mathlarger{b_L\in \mathbb{R}}$ and some non-degenerate CDF $\mathlarger{G_{\beta}}$ such that, as $\mathlarger{L \to \infty}$,
	\begin{equation}
		a_L^{-1}\left(M_L-b_L\right) \xrightarrow[]{D} G_{\beta},
		\label{condn_exist}
	\end{equation}
	where $\mathlarger{\xrightarrow[]{D}}$ denotes convergence in distribution,  then the CDF $\mathlarger{G_{\beta}}$ is one of the three CDFs:
	\begin{itemize}
		\item [] $Frechet \ : \ \Lambda_1 (z) :=  \begin{cases}
		0, & z \leq 0,  \\
		exp(-z^{-\beta}), & z>0,
		\end{cases} $ 
		\item []  $Reversed \ Weibull \ : \ \Lambda_2 (z) :=  \begin{cases}
		exp(-(-z)^{\beta}), & z \leq 0, \\
		1, & z>0,
		\end{cases} $
		\item [] $Gumbel \ : \ \Lambda_3(z) :=exp(-exp{(-z)}), \ \ \ z \in \mathbb{R}.$
	\end{itemize}
\end{theorem}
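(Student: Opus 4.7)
The plan is to follow the classical three-step route through max-stability. Call a non-degenerate CDF $G$ \emph{max-stable} if for every integer $k\geq 1$ there exist constants $\alpha_k>0$ and $\beta_k\in\mathbb{R}$ with $G^k(\alpha_k z+\beta_k)=G(z)$ for every $z$. I will (i) show that any limit $G_{\beta}$ arising from \eqref{condn_exist} must be max-stable, (ii) translate max-stability into a functional equation, and (iii) classify its solutions to recover exactly the three families $\Lambda_1,\Lambda_2,\Lambda_3$.

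For step (i), fix an integer $k\geq 1$ and consider the maximum $M_{kL}$ of $kL$ i.i.d.\ copies. On one hand, applying the hypothesis along the subsequence $kL$ yields $a_{kL}^{-1}(M_{kL}-b_{kL})\xrightarrow[]{D}G_{\beta}$. On the other hand, partitioning the $kL$ variables into $k$ independent blocks of size $L$ and exploiting the i.i.d.\ structure,
\[
P\!\left(a_L^{-1}(M_{kL}-b_L)\leq z\right)=\bigl[P\!\left(a_L^{-1}(M_L-b_L)\leq z\right)\bigr]^k\longrightarrow G_{\beta}^k(z),
\]
so $a_L^{-1}(M_{kL}-b_L)\xrightarrow[]{D}G_{\beta}^k$. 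Since $G_{\beta}$ is non-degenerate, so is $G_{\beta}^k$. Applying Khintchine's convergence-of-types theorem to the single sequence $\{M_{kL}\}$, affinely renormalized in two ways to two non-degenerate limits, then forces the ratios $a_{kL}/a_L$ and $(b_{kL}-b_L)/a_L$ to converge to finite constants $\alpha_k>0$ and $\beta_k$, and yields the max-stability identity $G_{\beta}^k(\alpha_k z+\beta_k)=G_{\beta}(z)$.

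For steps (ii) and (iii), set $\psi(z):=-\log\bigl(-\log G_{\beta}(z)\bigr)$ on the interior of the support of $G_{\beta}$; the max-stability relation becomes the additive identity $\psi(\alpha_k z+\beta_k)=\psi(z)+\log k$. After upgrading $k$ from integers to reals, which requires showing that $k\mapsto(\alpha_k,\beta_k)$ is continuous (obtained by evaluating the identity at two distinct points of continuity and solving the resulting affine system), this reduces to a Cauchy-type functional equation. Its continuous solutions split into three regimes governed by whether $\alpha_k\equiv 1$: if yes, one recovers the Gumbel family $\Lambda_3$; if no, the case splits further according to whether the upper endpoint of the support of $G_{\beta}$ is infinite, which after centering gives the Frechet family $\Lambda_1$, or finite, which gives the reversed Weibull family $\Lambda_2$. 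Identifying the tail exponent $\beta$ in the Frechet and Weibull cases uses Karamata-type regular variation of $1-G_{\beta}$.

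The main obstacle is step (iii). Showing that the scaling constants $\{(\alpha_k,\beta_k)\}_{k\in\mathbb{N}}$ embed into a regular one-parameter family $\{(\alpha_t,\beta_t)\}_{t>0}$, and then rigorously carrying out the trichotomy of continuous solutions to the functional equation while carefully locating the upper endpoint of the support, is where the real content of the theorem sits. Once those regularity issues are settled, steps (i) and (ii) are essentially bookkeeping built on the block-maxima decomposition and Khintchine's theorem.
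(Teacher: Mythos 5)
Your outline is correct, and it is the classical Gnedenko-style proof: block decomposition of $M_{kL}$, Khintchine's convergence-of-types theorem to force $a_{kL}/a_L\to\alpha_k$, $(b_{kL}-b_L)/a_L\to\beta_k$ and the max-stability identity $G_{\beta}^k(\alpha_k z+\beta_k)=G_{\beta}(z)$, followed by the functional-equation trichotomy. The paper itself gives no argument at all — it simply cites de Haan and Ferreira — and the proof in that reference takes a somewhat different route: it works with the left-continuous inverse $U=\bigl(1/(1-F)\bigr)^{\leftarrow}$, converts the hypothesis into a limit relation for $U(tx)-U(t)$, and derives the single generalized (von Mises) form $\exp\bigl(-(1+\gamma x)^{-1/\gamma}\bigr)$, from which Frechet, reversed Weibull and Gumbel fall out as $\gamma>0$, $\gamma<0$, $\gamma=0$; that unified parametrization avoids your separate case analysis and feeds directly into the domain-of-attraction criteria the paper uses afterwards (e.g.\ Theorem \ref{mda_frechet}), whereas your route is more self-contained and needs only elementary tools. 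Two caveats on your sketch: the conclusion should be stated up to type (the limit equals one of $\Lambda_1,\Lambda_2,\Lambda_3$ only after an affine change of variable, which is how the identity $G_{\beta}^k(\alpha_kz+\beta_k)=G_{\beta}(z)$ gets resolved); and Karamata regular variation is not actually needed to identify the exponent in the Frechet/Weibull cases — once the constants $(\alpha_s,\beta_s)$ are embedded in a one-parameter family, the equation $\psi(\alpha_s z+\beta_s)=\psi(z)+\log s$ solves explicitly. That embedding and the trichotomy are, as you say, where the real work lies, and they are only sketched here, so as it stands this is a correct plan rather than a complete proof.
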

\begin{proof}
	Please refer to page 6 in \cite{de2007extreme} for the proof.
\end{proof}
To determine the limiting distribution from the above three, we have to first define the maximum domain of attraction (\textit{MDA}). 
\theoremstyle{definition}
\begin{defn}{Maximum Domain of Attraction \cite{de2007extreme}:} \label{def_mda}
The CDF $F$ of i.i.d. RVs $z_1,\cdots, z_L$ belongs to the $MDA$ of the extreme value distribution (EVD) $G_\beta$, if and only if $\exists$ constants $a_L>0$ and $b_L\in \mathbb{R}$, such that (\ref{condn_exist}) holds.
\end{defn}
\begin{theorem} \label{mda_frechet}
A CDF $F$ belongs to the $MDA$ of the Frechet distribution, if it satisfies the following relation from \cite{de2007extreme}:
	 \begin{equation}
	 	\lim_{t\to\infty}\frac{1-F(tz)}{1-F(t)} \ = \ z^{-\beta}.
	 	\label{codn_frec1}
	 \end{equation}
	 \end{theorem}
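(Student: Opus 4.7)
The plan is to exhibit explicit normalizing sequences $a_L$ and $b_L$ for which $a_L^{-1}(M_L - b_L)$ converges in distribution to the Frechet CDF $\Lambda_1$; Definition \ref{def_mda} will then yield that $F$ lies in the MDA of $\Lambda_1$. I would take $b_L = 0$ throughout (since $\Lambda_1$ is supported on $(0,\infty)$ and carries no shift parameter) and use the canonical scaling
\[
a_L := F^{\leftarrow}(1 - 1/L), \qquad F^{\leftarrow}(u) := \inf\{x : F(x) \geq u\}.
\]
By the definition of the generalized inverse and right-continuity of $F$, this choice forces $L(1 - F(a_L)) \to 1$, which will supply precisely the right normalization to produce a Frechet tail.

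With $b_L = 0$, the CDF of $a_L^{-1} M_L$ evaluated at $z > 0$ is $F(a_L z)^L$, so the task reduces to showing
\[
\lim_{L\to\infty} F(a_L z)^L = \exp(-z^{-\beta}), \qquad z > 0.
\]
Taking logarithms and noting that $F(a_L z) \to 1$ (so that $\log F(a_L z) \sim -(1 - F(a_L z))$), it suffices to establish $L(1 - F(a_L z)) \to z^{-\beta}$. I would split this quantity as
\[
L\bigl(1 - F(a_L z)\bigr) \;=\; L\bigl(1 - F(a_L)\bigr) \cdot \frac{1 - F(a_L z)}{1 - F(a_L)},
\]
and pass to the limit. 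The first factor tends to $1$ by construction of $a_L$. Since the hypothesis applied at $z = 2$ precludes $F$ from having a finite right endpoint, one has $a_L \to \infty$, and so the second factor tends to $z^{-\beta}$ by the regular variation hypothesis (\ref{codn_frec1}). Assembling the two limits yields $z^{-\beta}$, as required.

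The step that needs the most care is the linearization $\log F(a_L z) \sim -(1 - F(a_L z))$, because the result is multiplied by the diverging factor $L$. I would control it through $\log(1-x) = -x + O(x^2)$, noting that the displayed factorization already shows $1 - F(a_L z) = O(1/L)$, so the quadratic remainder contributes $O(1/L)$ after multiplication by $L$ and vanishes in the limit. Once this is verified pointwise for each $z > 0$, convergence in distribution of $a_L^{-1} M_L$ to $\Lambda_1$ follows directly, and Definition \ref{def_mda} closes the argument.
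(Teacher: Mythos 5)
Your route is the standard sufficiency argument, and it is worth noting that the paper itself offers no proof of this theorem at all: it simply cites page 19 of de Haan and Ferreira. Your proposal therefore supplies the argument the paper outsources: take $b_L=0$, $a_L=F^{\leftarrow}(1-1/L)$, reduce $F(a_Lz)^L\to\exp(-z^{-\beta})$ to the tail limit $L\bigl(1-F(a_Lz)\bigr)\to z^{-\beta}$, and obtain the latter from the regular-variation hypothesis. The observation that applying the hypothesis at $z=2$ rules out a finite right endpoint (hence $a_L\to\infty$), and the control of the logarithm via $\log(1-x)=-x+O(x^2)$ together with $1-F(a_Lz)=O(1/L)$, are both correct.

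The one step whose justification does not stand as written is the claim that ``the definition of the generalized inverse and right-continuity of $F$'' force $L\bigl(1-F(a_L)\bigr)\to 1$. Those facts only give $F(a_L)\ge 1-1/L$, i.e.\ $\limsup_L L\bigl(1-F(a_L)\bigr)\le 1$; if $F$ has jumps in its tail the quantity need not converge to $1$ (a geometric-type tail is the standard counterexample), so the lower bound genuinely requires the regular-variation hypothesis. The repair is short: for $\epsilon\in(0,1)$ the definition of the infimum gives $F\bigl(a_L(1-\epsilon)\bigr)<1-1/L$, hence
\begin{equation}
L\bigl(1-F(a_L)\bigr) \;>\; \frac{1-F(a_L)}{1-F\bigl(a_L(1-\epsilon)\bigr)} \;\longrightarrow\; (1-\epsilon)^{\beta},
\end{equation}
so $\liminf_L L\bigl(1-F(a_L)\bigr)\ge(1-\epsilon)^{\beta}$ for every $\epsilon$, and letting $\epsilon\downarrow 0$ closes the gap. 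With that insertion your proof is complete (and, for full rigor of the convergence in distribution, note that the cases $z\le 0$ are immediate since $F(a_Lz)^L\le F(0)^L\to 0$ once the right endpoint is infinite).
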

	 \begin{proof}
	Please refer to page 19 in \cite{de2007extreme} for the proof.
\end{proof}
Now, if we show that the CDF $\mathlarger{F_{\gamma}(z)}$ satisfies the relation in (\ref{codn_frec1}), then from the definition of the $MDA$ of an EVD, we can conclude that there exists $\mathlarger{a_L}$ and $\mathlarger{b_L}$ satisfying (\ref{condn_exist}). A choice for the corresponding constants for the Frechet distribution is given in \cite{de2007extreme} as $\mathlarger{b_L = 0}$ and $\mathlarger{a_L = F^{-1}(1-L^{-1})}$.
\begin{theorem}
	The CDF $\mathlarger{F_{\gamma}(z)}$ is in the $MDA$ of the Frechet distribution.
\end{theorem}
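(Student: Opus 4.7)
The plan is to verify the regular-variation criterion of Theorem~\ref{mda_frechet} directly, by exhibiting an explicit $\beta > 0$ for which $1 - F_\gamma(t) \sim C\, t^{-\beta}$ as $t \to \infty$; the ratio in (\ref{codn_frec1}) will then converge to $z^{-\beta}$ for every $z > 0$ by inspection. I would write $\gamma = X_0 / Y$, where $X_0$ is the source power and $Y = \sum_{i=1}^{N} X_i$ is the aggregated interference, with $X_0, X_1,\dots, X_N$ independent $\kappa$-$\mu$ shadowed RVs carrying parameters $(\kappa_i, \mu_i, m_i, \bar{\gamma}_i)$. Conditioning on the source gives
\begin{equation*}
1 - F_\gamma(t) \;=\; \int_{0}^{\infty} F_Y(x/t)\, f_{X_0}(x)\, dx,
\end{equation*}
so the tail of the SIR is controlled by the small-argument behavior of the interference CDF $F_Y$.

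Next I would extract the leading term of $F_Y(y)$ as $y \to 0^+$. Each $\kappa$-$\mu$ shadowed density obeys $f_{X_i}(x) = c_i\, x^{\mu_i - 1}(1 + o(1))$ near the origin, as a consequence of ${}_1F_1(m_i;\mu_i;0)=1$; convolving the $N$ independent densities gives $f_Y(y) \sim K\, y^{\mu_{\mathrm{tot}}-1}$, where $\mu_{\mathrm{tot}} = \sum_{i=1}^{N}\mu_i$ and the constant $K$ is cleanly identified from the residue at the rightmost pole of the product of Mellin transforms. Integrating once yields $F_Y(y) \sim (K/\mu_{\mathrm{tot}})\, y^{\mu_{\mathrm{tot}}}$, and substitution into the displayed integral, justified by dominated convergence (the $\kappa$-$\mu$ shadowed law of $X_0$ has exponential tails, so all moments exist), gives
\begin{equation*}
1 - F_\gamma(t) \;\sim\; \frac{K\, E[X_0^{\mu_{\mathrm{tot}}}]}{\mu_{\mathrm{tot}}}\, t^{-\mu_{\mathrm{tot}}}, \qquad t \to \infty,
\end{equation*}
which immediately establishes (\ref{codn_frec1}) with $\beta = \mu_{\mathrm{tot}} = \sum_{i=1}^{N} \mu_i$.

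The main technical obstacle is the rigorous extraction of the leading term of $f_Y$ and the justification of passing to the limit inside the integral, since the SIR CDF available in closed form (e.g.\ \cite{kumar2017outage}) is an infinite series of Lauricella functions of the fourth kind, and matching such a series with a clean Mellin/convolution asymptotic requires care. One can bypass direct manipulation of that series by working entirely at the PDF level and invoking a Karamata-type argument to transport the regular variation of $F_Y$ near zero into regular variation of $\bar{F}_\gamma$ at infinity; alternatively, one may verify the expansion term-by-term using the series representation after reversing the order of summation and integration. As a sanity check, taking all interferers to be Rayleigh ($\kappa_i = 0$, $\mu_i = 1$) recovers $\beta = N$, matching the classical heavy-tail behavior of ratios of exponential-type RVs and the Frechet index obtained in \cite{pun2011mimo}. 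The normalizing constants of Theorem~\ref{fischer_tippet} are, as noted in the excerpt, $b_L = 0$ and $a_L = F_\gamma^{-1}(1 - 1/L)$, which are then computable from the same asymptotic via inversion of the leading term.
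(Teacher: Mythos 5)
Your proposal is correct, and it reaches the right index $\beta=\sum_{i=1}^{N}\mu_i$ by a genuinely different route from the paper. The paper works directly with the exact SIR CDF of \cite{kumar2017outage}, i.e.\ the infinite series of Lauricella $F_D^{(2N)}$ functions in (\ref{cdf1}): it expands the Lauricella function as a $2N$-fold series, isolates the $p_1=\cdots=p_{2N}=0$ term so that $1-F_\gamma(z)=z^{-\sum_i\mu_i}\{\text{const}+h(z)\}$ as in (\ref{ccdf5}), shows $h(z)\to 0$, and then verifies (\ref{codn_frec1}) by cancellation in the ratio. You instead never touch the closed-form series: you exploit the ratio structure $\gamma=X_0/Y$, note that each $\kappa$-$\mu$ shadowed density behaves like $c_i x^{\mu_i-1}$ at the origin so that $F_Y(y)\asymp y^{\sum_i\mu_i}$ as $y\to 0^+$, and transfer this to the tail via $1-F_\gamma(t)=\int_0^\infty F_Y(x/t)f_{X_0}(x)\,dx$ with dominated convergence. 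This is more elementary and in fact more general (it needs only regular variation of $F_Y$ at zero and finiteness of $\mathbb{E}[X_0^{\sum_i\mu_i}]$, so it covers any source distribution with light enough tails), and it additionally yields the explicit tail constant $K\,\mathbb{E}[X_0^{\mu_{\mathrm{tot}}}]/\mu_{\mathrm{tot}}$, which the paper's cancellation argument does not need and does not exhibit. Two small points to tighten: for the dominated-convergence step you should note the global bound $\sup_{y>0}F_Y(y)/y^{\mu_{\mathrm{tot}}}<\infty$ (it holds because the ratio is continuous, tends to a finite limit at $0$ and to $0$ at $\infty$), so that $t^{\mu_{\mathrm{tot}}}F_Y(x/t)\le M x^{\mu_{\mathrm{tot}}}$ with $M$ independent of $t$; and the convolution asymptotic $f_Y(y)\sim K y^{\mu_{\mathrm{tot}}-1}$ deserves a one-line justification (e.g.\ a Laplace-transform/Tauberian or induction-on-$N$ argument) rather than only the appeal to Mellin residues. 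The paper's approach, by contrast, buys a form of $1-F_\gamma(z)$ that is reused almost verbatim in its rate-of-convergence analysis (Appendix \ref{rateofconv}), which your route would have to develop separately.
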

\begin{proof}
    Please refer to Appendix \ref{proof_mda} for the detailed proof.
\end{proof}
Thus, we conclude that the CDF of $\mathlarger{\gamma^{L}_{max}}$ converges to the CDF of a Frechet RV ${\gamma}_{max}$ with shape parameter 
\begin{equation}
    \beta = \sum\limits_{i=1}^N\mu_i
    \label{beta}
\end{equation}
and scale parameter 
\begin{equation}
    a_L = F_{\gamma}^{-1}(1-L^{-1}). 
    \label{aL}
\end{equation}
The asymptotic distribution of the CDF of $\gamma^{L}_{max}$ is hence given by 
\begin{equation}
   {F}_{ {\gamma}_{max}}(z) =  \begin{cases}
		0, & z \leq 0,  \\
		\exp\left(-\left( {z}/{a_L}\right)^{-\beta}\right), & z>0.
		\end{cases} 
		\label{asymp_cdf}
\end{equation}
The above expression is far easier to evaluate than the $L^{th}$ power of (\ref{cdf1}) for large values of $L$. 

\subsection{Moment Convergence} \label{sectionB}
We will examine the convergence of moments of $\gamma^L_{max}$ to those of ${\gamma}_{max}$. This is useful in evaluating various average ergodic performance metrics with respect to the maximum SIR RV. We make use of the following results from \cite{de2007extreme} to prove the convergence of moments. 
\begin{lemma} \label{moment_lemma}
	If F, the CDF of a RV $Z$, belongs to the domain of attraction of $\mathlarger{G_{\beta}}$, then $\mathlarger{\forall}$ $\mathlarger{-\infty<z<\omega(F)}$,
	\begin{equation}
		\mathbb{E}[|Z|^{\nu} \ \textbf{1}_{Z>z}] :   \begin{cases}
		< \ \infty, & if \ 0<\nu<\beta^+,  \\
		= \ \infty, & if \ \nu > \beta^+,
		\end{cases}
	\end{equation}
	
	where $\mathlarger{\beta^+ := max\{0,\beta\}}$,  $\mathlarger{\omega(F):= sup \{z \in \mathbb{R} : F(z)<1 \}}$ and $\mathlarger{\textbf{1}_{Z>z}}$ is the indicator function for the event given by $\mathlarger{Z>z}$. 
\end{lemma}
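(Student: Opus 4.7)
The plan is to reduce the claim to a tail-integral estimate and then exploit the tail characterization of each maximum domain of attraction via Karamata-style regular variation theory.

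First, I would rewrite the truncated moment in tail form. By integration by parts on $[\max(z,0), \omega(F))$, for $z > 0$,
\begin{equation}
\mathbb{E}[|Z|^\nu \mathbf{1}_{Z > z}] \;=\; z^\nu (1 - F(z)) \;+\; \nu \int_z^{\omega(F)} y^{\nu - 1} (1 - F(y))\, dy .
\end{equation}
Extension to $z \le 0$ is routine: the contribution from $\{z < Z \le 0\}$ is bounded and hence finite, so finiteness (or divergence) of the whole expression is equivalent to integrability of $y^{\nu-1}(1-F(y))$ at the right endpoint $\omega(F)$.

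Second, I would invoke the tail characterization of each MDA class. For the Frechet case, which is the only case directly needed in this paper, $\omega(F) = +\infty$ and $1 - F(y) = y^{-\beta} \ell(y)$ with $\ell$ slowly varying at infinity. The integrand is therefore regularly varying of index $\nu - 1 - \beta$; Karamata's theorem then gives integrability when $\nu < \beta$ and divergence when $\nu > \beta$. For the Weibull case, $\omega(F) < \infty$ so $|Z|$ is essentially bounded and every positive moment is finite. For the Gumbel case, $1 - F(y)$ decays faster than any polynomial, hence $y^{\nu-1}(1-F(y))$ is integrable for every $\nu > 0$. These three pictures combine into the compact dichotomy of the lemma, with the Weibull and Gumbel cases subsumed under the convention $\beta^+ = \max\{0,\beta\}$ read with $\beta^+ = \infty$ for Gumbel.

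The main obstacle, and the step that does the real work, is the Karamata integration: one cannot naively replace $\ell(y)$ by a constant before integrating. The full statement
\begin{equation}
\int_t^\infty y^{\nu - 1 - \beta} \ell(y)\, dy \;\sim\; \frac{1}{\beta - \nu}\, t^{\nu - \beta} \ell(t) \quad (\nu < \beta)
\end{equation}
together with its companion divergence statement for $\nu > \beta$ is what delivers both halves of the finiteness/infinity dichotomy. Everything else in the argument is bookkeeping.
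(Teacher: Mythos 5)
Your proposal is correct. The paper gives no internal proof of this lemma---it simply defers to \cite{de2007extreme}---and your argument (tail-integral representation of the truncated moment, the regular-variation characterization $1-F(y)=y^{-\beta}\ell(y)$ of the Frechet MDA, then Karamata's theorem to get convergence of $\int^{\infty} y^{\nu-1-\beta}\ell(y)\,dy$ when $\nu<\beta$ and divergence when $\nu>\beta$) is exactly the standard route taken in that reference, so there is nothing to flag as a gap. Two small remarks: the regular-variation hypothesis you invoke for the Frechet case is precisely the condition already recorded in the paper as Theorem \ref{mda_frechet}, relation (\ref{codn_frec1}), so you may cite it directly rather than re-deriving it; and your reading of the $\beta^{+}=\max\{0,\beta\}$ convention (all positive moments finite in the reversed-Weibull and Gumbel cases, with the dichotomy only biting in the Frechet case actually used here) is the correct interpretation of the statement, whose literal form is an artifact of the extreme-value-index parametrization of \cite{de2007extreme}.
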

	 \begin{proof}
	Please refer to \cite{de2007extreme} for the proof.
\end{proof}
\begin{theorem} \label{moment_thm}
	Let $Z$ be an $F$ distributed RV and $F$ belongs to the domain of attraction of $G_{\beta}$, if $\mathbb{E}[Z^{\nu}]$ is finite for some $\nu<\beta^+$ then,
	\begin{equation}
		\lim_{L\to\infty}\mathbb{E}\left[\left(\frac{M_L-b_L}{a_L}\right)^\nu\right] = \int\limits_{-\infty}^{\infty} z^{\nu} dG_{\beta}(z),
		\label{moment_condtn}
	\end{equation}
where $\beta^+ := max\{0,\beta\}$.
\end{theorem}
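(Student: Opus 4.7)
The plan is to combine the distributional convergence of the normalized maximum, already established via the Fisher--Tippett theorem and the MDA characterization, with a uniform integrability argument for the $\nu$-th powers. Concretely, Theorem \ref{fischer_tippet} together with $F_{\gamma} \in MDA(G_\beta)$ (proved in the preceding theorem) yields $(M_L - b_L)/a_L \xrightarrow{D} G_\beta$ as $L \to \infty$, and the continuous mapping theorem propagates this to the $\nu$-th powers. What remains is to upgrade this convergence in distribution to convergence of the $\nu$-th moments, for which it suffices to exhibit a uniformly integrable family.

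For the uniform integrability I would invoke the de la Vall\'ee Poussin criterion: since $\nu < \beta^+$, fix $\nu' \in (\nu, \beta^+)$ and aim to show
\begin{equation*}
\sup_{L \geq 1} \mathbb{E}\left[\left|\frac{M_L - b_L}{a_L}\right|^{\nu'}\right] < \infty.
\end{equation*}
Lemma \ref{moment_lemma} guarantees $\mathbb{E}[|Z|^{\nu'}] < \infty$ by letting the truncation threshold in $\mathbf{1}_{Z>z}$ fall to $-\infty$, and this finiteness of a strictly higher moment of $Z$ is the key input that, combined with the bound above, yields uniform integrability of $\{((M_L - b_L)/a_L)^{\nu}\}_L$ and hence convergence of the $\nu$-th moments along the weakly convergent sequence.

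To establish the uniform moment bound I would rewrite the expectation as the tail integral $\nu' \int_0^\infty z^{\nu'-1}\bigl\{1 - [F(a_L z + b_L)]^L\bigr\}\,dz$ and use the elementary inequality $1 - F^L \leq L(1 - F)$. The choice $a_L = F^{-1}(1 - L^{-1})$ from (\ref{aL}) together with the regular-variation property $\lim_{t\to\infty}[1 - F(tz)]/[1 - F(t)] = z^{-\beta}$ established in Theorem \ref{mda_frechet} allows a Potter-type argument that majorizes $L\,\bar F(a_L z + b_L)$ by a function integrable against $z^{\nu'-1}$ on $[1,\infty)$, since $\nu' < \beta$; the contribution from $z$ in a bounded interval near the origin is handled by the crude bound $1 - F^L \leq 1$ and the already-established finiteness of $\mathbb{E}[|Z|^{\nu'}]$.

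The main obstacle will be the Potter-type uniform bound on $L\,\bar F(a_L z + b_L)$: one has to show that the convergence promised by Theorem \ref{mda_frechet} is uniform enough in both $L$ and $z$ for the tail integral to be controlled by a single integrable dominator. Once this dominator is in place, Lemma \ref{moment_lemma} completes the $\nu'$-th moment finiteness, the de la Vall\'ee Poussin criterion delivers uniform integrability of $\{((M_L-b_L)/a_L)^{\nu}\}_L$, and the standard upgrade of weak convergence under uniform integrability yields $\mathbb{E}[((M_L - b_L)/a_L)^\nu] \to \int_{-\infty}^{\infty} z^\nu \, dG_\beta(z)$, as claimed.
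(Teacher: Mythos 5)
The paper itself contains no proof of this theorem --- it simply defers to page 176 of de Haan and Ferreira --- so there is no in-paper argument to measure you against; your sketch is the standard textbook route for the Fr\'echet domain (weak convergence of $(M_L-b_L)/a_L$ upgraded to moment convergence via uniform integrability, with the uniform $\nu'$-th moment bound coming from the tail-integral representation, $1-F^L\le L(1-F)$, the normalization $L\,(1-F(a_L))\le 1$ from (\ref{aL}), and a Potter-type bound for the regularly varying tail), and it is essentially correct in the setting the paper actually needs ($b_L=0$, nonnegative SIR, $\beta=\sum_{i=1}^N\mu_i>0$). Two caveats are worth recording. First, your argument leans on the regular-variation characterization of Theorem \ref{mda_frechet}, so it proves the statement only for the Fr\'echet domain of attraction rather than for an arbitrary $G_\beta$ as stated; that is exactly the case used in the paper, but it is narrower than the cited result, whose proof must also cover the Gumbel and reversed-Weibull domains. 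Second, Lemma \ref{moment_lemma} controls only the truncated right-tail moments $\mathbb{E}\bigl[|Z|^{\nu'}\mathbf{1}_{Z>z}\bigr]$ for fixed $z<\omega(F)$; letting $z\to-\infty$ does not in general yield $\mathbb{E}|Z|^{\nu'}<\infty$ unless the left tail is separately controlled --- here it is, because $Z=\gamma\ge 0$, but in the general statement this is precisely where the hypothesis $\mathbb{E}[Z^{\nu}]<\infty$ enters, and your write-up should say so rather than attribute it to the lemma. With those provisos (and noting that the finitely many values of $L$ below the Potter threshold are covered by $\mathbb{E}\bigl[(M_L)^{\nu'}\bigr]\le L\,\mathbb{E}\bigl[Z_+^{\nu'}\bigr]<\infty$), the uniform-integrability argument closes correctly.
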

	 \begin{proof}
	Please refer to page 176 in \cite{de2007extreme} for the proof.
\end{proof}

From Lemma \ref{moment_lemma}, we observe that $\mathbb{E}[Z^{\nu}]$ is finite for all values of $0 < \nu<\beta^+$. So, according to Theorem \ref{moment_thm}, (\ref{moment_condtn}) holds for all $\nu$ in this range. Hence, we conclude that the $\nu^{th}$ moment of the RV $\mathlarger{\gamma^{L}_{max}}$ converges to the $\mathlarger{\nu^{th}}$ moment of ${\gamma}_{max}$, for all $\mathlarger{\nu \ < \ \sum\limits_{i=1}^{N}\mu_i}$. 
\vspace{2mm}

\subsection{Rate of convergence} \label{sectionC}
Note that, (\ref{condn_exist}) guarantees the convergence of the distribution of $\gamma^{L}_{max}$ to a Frechet distribution, but does not discuss the rate of convergence. In other words, it does not discuss how fast $\gamma^{L}_{max} \xrightarrow[]{D}{\gamma}_{max}$. The rate of convergence is not the same for all distributions in any domain of attraction. In fact, it is a function of the initial distribution parameters and depends on the equivalence of the tail of the initial distribution function to the tail of a generalized Pareto distribution (GPD) \cite{de2007extreme}. The closer the tail-behaviour of the initial distribution to the tail-behaviour of a GPD, faster is its rate of convergence. We now give the rate of convergence for our case through the following theorem.
\begin{theorem} \label{rate_cnvg}
	The rate of convergence of $F_{\gamma^{L}_{max}}(z)$ to the Frechet distribution is \\ $O\left(L^{-\left(\sum\limits_{i=1}^{N}\mu_{i} \right)^{-1}} +  L^{-1} \right)$.
\end{theorem}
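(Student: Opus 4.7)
The plan is to quantify the two distinct sources of error in the approximation $F_\gamma^L(a_L z) \approx \exp(-z^{-\beta})$, with $\beta = \sum_{i=1}^N \mu_i$ and $a_L = F_\gamma^{-1}(1-L^{-1})$, and show that they produce exactly the two terms $L^{-1/\beta}$ and $L^{-1}$ in the rate.

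First, I would establish a two-term tail expansion
\begin{equation*}
1 - F_\gamma(z) \;=\; c_1\, z^{-\beta} \;+\; c_2\, z^{-(\beta+1)} \;+\; o\!\left(z^{-(\beta+1)}\right) \quad \text{as } z\to\infty,
\end{equation*}
for constants $c_1,c_2$ depending on the $\kappa,\mu,m$ parameters and the interferer powers. This is the main obstacle: the CDF $F_\gamma$ for i.n.i.d.\ $\kappa\text{-}\mu$ shadowed interferers is given (via \cite{kumar2017outage}) as an infinite series of Lauricella $F_D^{(N)}$ functions, so isolating the first two terms of its large-argument asymptotics requires combining the integral representation of $F_D^{(N)}$ with a careful large-$z$ expansion of the resulting Beta-type integrand, keeping the leading $z^{-\beta}$ piece together with the next-order $z^{-(\beta+1)}$ correction and verifying that the remainder is genuinely of lower order after the infinite summation is absorbed.

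Second, with this expansion in hand, I would exploit the defining identity $1-F_\gamma(a_L) = L^{-1}$ to write
\begin{equation*}
L\bigl(1 - F_\gamma(a_L z)\bigr) \;=\; \frac{1 - F_\gamma(a_L z)}{1 - F_\gamma(a_L)} \;=\; z^{-\beta}\Bigl(1 + O\bigl(a_L^{-1}\bigr)\Bigr).
\end{equation*}
Since the leading order of $a_L$ is $(c_1 L)^{1/\beta}$, this yields $L(1-F_\gamma(a_L z)) = z^{-\beta} + O(L^{-1/\beta})$, which is exactly the source of the first term in the stated rate.

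Third, I would apply the Taylor expansion $\log(1-u) = -u - u^2/2 - \cdots$ with $u = 1-F_\gamma(a_L z)$, giving
\begin{equation*}
L\,\log F_\gamma(a_L z) \;=\; -L\bigl(1-F_\gamma(a_L z)\bigr) \;+\; O\!\left(L\bigl(1-F_\gamma(a_L z)\bigr)^2\right) \;=\; -z^{-\beta} \;+\; O\!\left(L^{-1/\beta}\right) \;+\; O\!\left(L^{-1}\right),
\end{equation*}
where the quadratic remainder contributes the $L^{-1}$ term uniformly on compact $z$-sets bounded away from $0$. Exponentiating and using $|e^{x}-1| = O(|x|)$ for small $x$ then gives $|F_\gamma^L(a_L z) - \exp(-z^{-\beta})| = O(L^{-1/\beta} + L^{-1})$, which is the claim. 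The whole argument therefore reduces to the tail expansion in the first step; the remainder is routine bookkeeping.
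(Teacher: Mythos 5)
Your outline is essentially sound, but it follows a genuinely different route from the paper. The paper never works with the CDF tail directly: it derives the PDF of the SIR (Appendix on the PDF derivation), massages it via a transformation of the confluent Lauricella function into the form $f_{\gamma}(z)=K_6\,z^{-(1+\beta)}\,(\text{const})\,(1+O(z^{-1}))$ with $\beta=\sum_{i=1}^{N}\mu_i$, recognizes this as membership in the $\delta$-neighborhood $Q_1(\delta)$ of the generalized Pareto distribution $W_{1,\beta}$ with $\delta=1/\beta$, and then invokes the Falk et al.\ lemma to get the rate $O(L^{-\delta}+L^{-1})$ uniformly over all Borel sets (a variational-distance statement). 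Your plan instead establishes a two-term tail expansion $1-F_\gamma(z)=c_1z^{-\beta}+c_2z^{-(\beta+1)}+o(z^{-(\beta+1)})$, uses $1-F_\gamma(a_L)=L^{-1}$ and a $\log$/Taylor expansion, and concludes $|F_\gamma^L(a_Lz)-\exp(-z^{-\beta})|=O(L^{-1/\beta}+L^{-1})$; this is more elementary (it bypasses the GPD-neighborhood machinery entirely) and makes the origin of the two error terms transparent — the $L^{-1/\beta}$ from the second-order tail term via $a_L\asymp L^{1/\beta}$, the $L^{-1}$ from the quadratic term in $\log(1-u)$ — but it yields a Kolmogorov-type rate, uniform only on compact $z$-sets bounded away from $0$ unless you add a (short) separate argument for $z\downarrow 0$ and for the uniformity of the tail expansion, whereas the paper's lemma delivers the stronger sup-over-Borel-sets statement. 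Be aware, too, that your ``main obstacle'' is not lighter than the paper's: extracting the $z^{-(\beta+1)}$ correction with a controlled remainder from the infinite Lauricella series is exactly the bookkeeping the paper performs at the density level (the series rearrangement isolating the all-zero index term, the single-nonzero-index terms, and the rest), so that step must be carried out with the same care before the rest of your argument, which is indeed routine, can close.
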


\begin{proof}
Please see Appendix. \ref{rateofconv} for the detailed proof. 
\end{proof}
This result is equivalent to stating that 	\begin{equation}
		\sup_{B\in \mathbb{B}} \left| \mathbb{P}\left(\left(\left( \frac{\gamma^{L}_{max}}{a}  \right) / L^{\beta}\right) \in B\right) - \Lambda_1(B)\right| = O\left(\left(\frac{1}{L}\right)^{\delta} + \frac{1}{L}\right),
		\end{equation}
	where $\delta=\left(\sum\limits_{i=1}^{N}\mu_{i} \right)^{-1}$, $\mathbb{B}$ denotes the Borel $\sigma$ algebra on $\mathbb{R}$, $\Lambda_1(.)$ is the asymptotic maxima distribution and $a$ is a positive constant. Hence, we can see that the maximum deviation between the true distribution of the maximum and the asymptotic distribution of the maxima over all the points, decreases with an increase in $L$ or $1/\beta$. (Note that, this is the rate of convergence at the point of maximum possible deviation over the entire support of the maximum distribution. We can expect faster convergence over some subsets of the support of the maxima distribution). Further, this result says that the rate of convergence is determined by the number of interferers $N$ and the number of clusters $\mu_i$ (for $i=1,\cdots,N$) in the interferers' fading distribution. The convergence rate decreases as the number of interferers increases or the number of clusters $\mu_i$ increases. Thus, the distributions of $\mathlarger{\gamma^{L}_{max}}$ for interferers with fading environments having $\mu_i=1$ (Rayleigh, Rician or shadowed Rician) converge faster to the asymptotic distribution of the maximum, given by (\ref{asymp_cdf}) than those having $\mu_i>1$ (Nakagami-m, $\kappa$-$\mu$ or $\eta$-$\mu$). Also, note that the parameters $\kappa$, $\mu$ and $m$ of the source and the parameters $\kappa_i$ and $m_i$ (for $i=1,...,N$) of the interferers do not affect the convergence rate. 
\section{Analysis of extreme-value distribution} \label{analysis}
Now that we have derived the asymptotic distribution of the maximum of $L$ i.i.d. SIR RVs where the source and the interferers are assumed to experience i.n.i.d. $\kappa-\mu$ shadowed fading, in this section, we analyze the impact of fading parameters $\kappa$, $\mu$ and $m$ on the asymptotic distribution. For this, we give the following key lemma. 
\begin{lemma}\label{frechetlemma1}
Consider two Frechet RVs $P$ and $Q$ with parameters $\mathlarger{\{a_{L1}, \beta\}}$ and $\mathlarger{\{a_{L2}, \beta\}}$ respectively. $P$ is stochastically larger than $Q$ if
\begin{equation}
    \mathbb P(P < z) < \mathbb P(Q <z), \ \forall z >0. 
\end{equation}
In other words, $\mathlarger{P>_{st}Q}$ if
\begin{equation}
    exp\left(-\left( \frac{z}{a_{L1}}\right)^{-\beta}\right) < exp\left(-\left( \frac{z}{a_{L2}}\right)^{-\beta}\right).
\end{equation}
The above condition is achieved when $\mathlarger{a_{L1} \geq a_{L2}}$.
\end{lemma}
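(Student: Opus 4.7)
The plan is a direct algebraic verification. Both CDFs are of the Frechet form $\exp(-(z/a_L)^{-\beta})$ and share the shape parameter $\beta$, which is positive, so the ordering must be determined entirely by the scale parameters $a_{L1}$ and $a_{L2}$. The strategy is to reduce the CDF inequality $\exp(-(z/a_{L1})^{-\beta}) < \exp(-(z/a_{L2})^{-\beta})$ to a condition on $a_{L1}$ versus $a_{L2}$ via a short chain of equivalences, each relying on the monotonicity of an elementary function.

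The execution is as follows. First, since $\exp$ is strictly increasing, the CDF inequality is equivalent to $-(z/a_{L1})^{-\beta} < -(z/a_{L2})^{-\beta}$, i.e., $(z/a_{L1})^{-\beta} > (z/a_{L2})^{-\beta}$. Next, because $\beta = \sum_i \mu_i > 0$, the map $x \mapsto x^{-\beta}$ is strictly decreasing on $(0,\infty)$, so the previous inequality is equivalent to $z/a_{L1} < z/a_{L2}$. Finally, dividing through by the positive quantity $z$ and taking reciprocals of two positive numbers (which reverses the inequality) yields $a_{L1} > a_{L2}$. Since this chain of equivalences is independent of the particular value of $z > 0$, the condition characterizes the stochastic ordering on the whole positive half-line, which is precisely the support of the Frechet law.

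There is no real obstacle; the only care needed is to track the direction of inequality through the two sign reversals (the negative exponent, then the reciprocal). The lemma is stated with the non-strict condition $a_{L1} \geq a_{L2}$: the equality case $a_{L1} = a_{L2}$ simply makes $P$ and $Q$ identically distributed, so $P \geq_{st} Q$ holds in the non-strict sense consistent with the standard definition of stochastic dominance, and the strict case $a_{L1} > a_{L2}$ follows from the argument above.
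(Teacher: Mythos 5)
Your proof is correct and follows the same direct route the paper implicitly relies on (the paper states the lemma without a separate proof, since it reduces to exactly this chain of monotonicity arguments: $\exp$ increasing, $x\mapsto x^{-\beta}$ decreasing for $\beta>0$, and reciprocation of positive scales). Your remark that strict CDF inequality actually requires $a_{L1}>a_{L2}$, with equality giving identical distributions and hence only non-strict dominance, is a fair and accurate clarification of the lemma's ``$a_{L1}\geq a_{L2}$'' phrasing.
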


 According to Lemma \ref{frechetlemma1}, the variations in the asymptotic CDF of the maximum SIR is governed by the variations in $a_L$ where $\mathlarger{a_L=F_{\gamma}^{-1}(1-L^{-1})}$. Hence, the variation in the CDF of the maximum SIR with respect to the variations in the source fading environment can be studied by analyzing the variations in $a_L$. However, the relationship between various parameters and $a_L$ is highly non linear, and therefore comprehending these variations with respect to changes in the fading parameters is very difficult. One way to circumvent this problem is to use moment matching as in \cite{kumar2015approximate}, and approximate each of the $\kappa-\mu$ shadowed RV as a gamma RV. The $\kappa-\mu$ shadowed RV corresponding to the users fading coefficients with parameters ${(\kappa,\mu,m,\bar{x})}$ can be approximated with a gamma RV with shape parameter $\psi_1 = \frac{m\mu(1+\kappa)^2}{m+\mu\kappa^2+2m\kappa}$ and scale parameter $\psi_2=\frac{\bar{x}}{\psi_1}$. Similarly, each of the $\kappa-\mu$ shadowed interferer can be first approximated as a gamma RV and their sum can be further approximated by another gamma RV with parameters $(\phi_1,\phi_2)$ using \cite[Eqn. (4)]{kumar2015approximate}. Here, we have $\mathlarger{F_{\gamma}(z) = \mathbb{P}(\gamma \leq z)\approx \mathbb{P}\left( \frac{\boldsymbol{\Gamma}(\psi_1,\psi_2)}{\boldsymbol{\Gamma}(\phi_1,\phi_2)} \leq z \right)=\mathbb{P}\left(\frac{\boldsymbol{\Gamma}(\psi_1,1)}{\boldsymbol{\Gamma}(\phi_1,1)} \leq z \frac{\phi_2}{\psi_2} \right)}$, where $\mathlarger{\boldsymbol{\Gamma}(.,.)}$ represents a gamma distributed RV. This ratio of gamma RVs has a beta-prime CDF \cite{dubey1970compound} with parameters $\mathlarger{\psi_1}$ and $\mathlarger{\phi_1}$ evaluated at $\mathlarger{z\frac{\phi_2}{\psi_2}}$. Now, the analysis in \cite{secrecycapacity} can be used to make inferences about the approximate variation in $\mathlarger{F_{\gamma}(z)}$, with respect to the changes in $\kappa,\mu$ and $m$. Based on the analysis, we give the following observations.\\
 \textbf{\textit{Observation 1 } : Scale parameter of the Frechet distribution $\boldsymbol{a_L}$ increases with increase in $\boldsymbol{\mu}$ or $\boldsymbol{m}$.}\\
Observe that, an increase in $\mu$ or $m$ results in an increase in $\psi_1$. According to $I4$ in Section III of \cite{secrecycapacity}, with an increase in $\psi_1$ along with a proportionate increase in $\bar{x}$, we can observe a decrease in $F_{\gamma}(z)$. Since CDF is an monotonically increasing function, to obtain the same CDF value of $1-\frac{1}{L}$ even after an increase in $\mu$ or $m$, the CDF evaluation point, which in our case is $a_L$, has to increase.\\
\textbf{\textit{Observation 2 } : Scale parameter of the Frechet distribution $\boldsymbol{a_L}$ increases with increase in $\boldsymbol{\kappa}$ if $\boldsymbol{m -\mu \geq 0 }$ and decreases otherwise .} \\
The derivative of $\psi_1$ with respect to $\kappa$ is given by $\mathlarger{\frac{2\kappa(1+\kappa)m\mu(m-\mu)}{\left(m+2\kappa m+\kappa^2\mu \right)^2}}$. This shows that $\psi_1$ increases with an increase in $\kappa$ if $m-\mu > 0$ and decreases otherwise. This in turn implies that the scale parameter $\mathlarger{F_{\gamma}^{-1}(1-L^{-1})}$ increases with an increase in $\kappa$, if $m-\mu > 0$ and decreases otherwise. Hence, following the same reasoning given in \textit{Observation 2}, we can infer that an increase in $\kappa$ increases $a_L$, if $m-\mu>0$, owing to the increase in $\psi_1$. Similarly, an increase in $\kappa$ results in an decrease in $a_L$, if $m-\mu<0$.
\par Thus \textit{Observation 1}, \textit{Observation 2} along with Lemma \ref{frechetlemma1} gives inferences on the variation of the asymptotic maximum distribution with respect to the changes in the source's fading environment. Further, Table I in \cite{pozas_shadowed} summarizes the relation between $\kappa-\mu$ shadowed fading model and many common fading models like Rayleigh, Rician, Nakagami, etc. Using these results, we can analyze the variations in the maximum SIR for any specific fading environment as well. 

\subsection{KL divergence between asymptotic maximum distribution and the true maximum distribution}
To get a quantitative idea of how the convergence of the true distribution of the maximum to the asymptotic distribution of the maximum varies for different values of $L$ and $\beta$, we compute the empirical KL divergence between the maximum SIR samples and the samples from the corresponding Frechet distribution\footnote{Since the exact CDF of the maximum SIR has a complicated structure, it is mathematically intractable to derive an expression for the KL divergence and hence we calculate the empirical KL divergence}. To calculate the empirical KL-divergence, we use the method discussed in \cite{wang2005divergence}. 
Let $\{X_1,\cdots,X_n \}$ and $\{Y_1,\cdots,Y_n \}$ be i.i.d samples from the distributions $P$ and $Q$ respectively.
    Now, we compute their histograms over the complete range of samples divided into equispaced bins. According to the Freedman-Diaconis rule\footnote{$Number\ of\ bins = \frac{Max(\{X_i\})-Min(\{X_i\})}{2 \times IQR \times n^{-1/3}}$ where $IQR$ is the interquartile range.} \cite{freedman1981histogram}, the number of bins is computed for both set of samples and the maximum of the two, given by $W$, is chosen to compute the histograms. If $u_i$ and $v_i$ represent the number of samples in the $i^{th}$ bin of histograms of $P$ and $Q$ respectively, then the corresponding empirical KL divergence is computed as \cite{wang2005divergence} \begin{equation}
    D_{KL}(P||Q) \approx \sum\limits_{i=1}^W \frac{u_i}{n}\log\left(\frac{u_i}{v_i} \right). 
\end{equation} The following tables give the empirical KL divergence of the asymptotic distribution of the maximum from the true distribution of the maximum for different values of $N$, $L$ and $\beta$.  The number of samples in each case is $n=10^6$. Tables \ref{table_kl} (a)-(c) gives the KL divergence in Rayleigh fading scenario for different number of interferers. The smaller the KL divergence, the closer are two distributions. We can see that the KL divergence decreases as $L$ increases for all the cases as expected. Similarly, for the same value of $L$, we can see that the KL divergence increases with $N$. This observation is also in agreement with the rate of convergence results derived which says that the rate decreases with an increase in $\beta$. For the case of Rayleigh fading, we have $\beta=N$.
\begin{table}[H]
	\centering
\begin{tabular}{|c|p{2cm}|p{2cm}|p{2cm}||p{2cm}|p{2cm}|p{2cm}|}  
\hline                               
L & (a) KL divergence for N=1 & (b) KL divergence for N=2& (c) KL divergence for N=3 & (d) KL divergence for $\beta=2$ & (e) KL divergence for $\beta=3$ & (f) KL divergence for $\beta=4$\\    
\hline                               
                           
20 & 3.056835e-04 & 6.917400e-02 & 1.866447e-01 &8.416245e-02 & 8.327871e-02 & 4.312127e-01\\             
\hline                               
                              
40 & 2.401260e-04 & 3.431374e-02 & 1.365351e-01 &  3.17051e-02 & 2.096145e-02 & 4.113886e-01\\             
\hline                               
                               
60 & 1.740811e-04 & 2.289564e-02 & 9.692215e-02 &  2.654761e-02 & 1.222194e-02 & 3.745029e-01\\             
\hline                               
                               
80 & 1.173821e-04 & 1.725099e-02 & 8.344516e-02 &  1.877953e-02 & 1.163133e-02 & 3.373044e-01\\             
\hline                               
                              
100 & 9.642496e-05 & 1.346993e-02 & 6.688599e-02 & 1.041195e-02 & 1.120306e-02 & 2.796802e-01\\           
\hline                               
\end{tabular}                               
\caption{Empirical KL divergence values for Rayleigh and $\kappa-\mu$ shadowed fading}             
\label{table_kl}       
	\end{table}
Similarly, Tables \ref{table_kl} (d)-(f) give the empirical KL divergence for the case of $\kappa-\mu$ shadowed fading for different values of $\beta$. Here too, we can see that the KL divergence increases with a decrease in $L$ or an increase in $\beta$. From the above tables, it is clear that for $L>60$, the KL divergence values are small and hence our asymptotic distribution very well approximates the true distribution of the maximum, for maximum taken over sequences of length greater than 60.\footnote{Note that, similar behaviour of KL divergence is observed for different values of $\kappa$, $\mu$ and m. However, due to space constraints we have only included the results for a subset of cases.} This along with the rate of convergence analysis reaffirms the claim that our asymptotic results can be reliably used for the performance analysis in all scenarios where we need the statistics of the maximum SIR.

\section{Applications and Simulations} \label{results}
Our results can be used in any application which involves the maximum SIR statistic. Here, we present some example applications.

\subsection{Asymptotic outage probability and ergodic rate of MSC system.}
\par The limiting distribution of the maximum SIR RV is useful for the asymptotic performance analysis of channel-aware packet scheduling systems \cite{song2006asymptotic}. Consider a time-slotted downlink channel shared among $L$ users. With MSC scheduling, the channel is assigned to the user with the maximum SIR in each time slot. The authors of \cite{song2006asymptotic} have analyzed the performance of such a system in a noise-limited scenario under Rayleigh fading channel using EVT. We can readily generalize these results to a $\kappa-\mu$ shadowed fading environment using the results derived in the previous section. 
    \subsubsection{Outage probability} 
The probability of outage of the user in each time slot of an MSC scheduling system, for a threshold $\gamma^{}_T$, in an interference limited scenario is given by 
\begin{equation}
    \mathbb{P}(\gamma^{L}_{max} \leq \gamma_T) = \prod\limits_{j=1}^L F_{\gamma_i}(z),
\end{equation}
where $\gamma_i$ is the SIR of the $i^{th}$ user.  When all the users experience identical fading, the above probability is the same as $\left(F_\gamma(z) \right)^L$ evaluated at $\gamma_T$. Note that, the evaluation of the true distribution of the maximum given by $\left(F_\gamma(z) \right)^L$ is not computationally tractable in a $\kappa-\mu$ shadowed fading environment. However, the asymptotic distribution function derived in the previous section can be used to compute this outage probability easily. In the following figures, we show the simulated and asymptotic CDF of maximum for a two interferer scenario.

 \begin{figure}[H]
	\centering
	\begin{minipage}[t]{0.5\textwidth}
\includegraphics[scale=0.45]{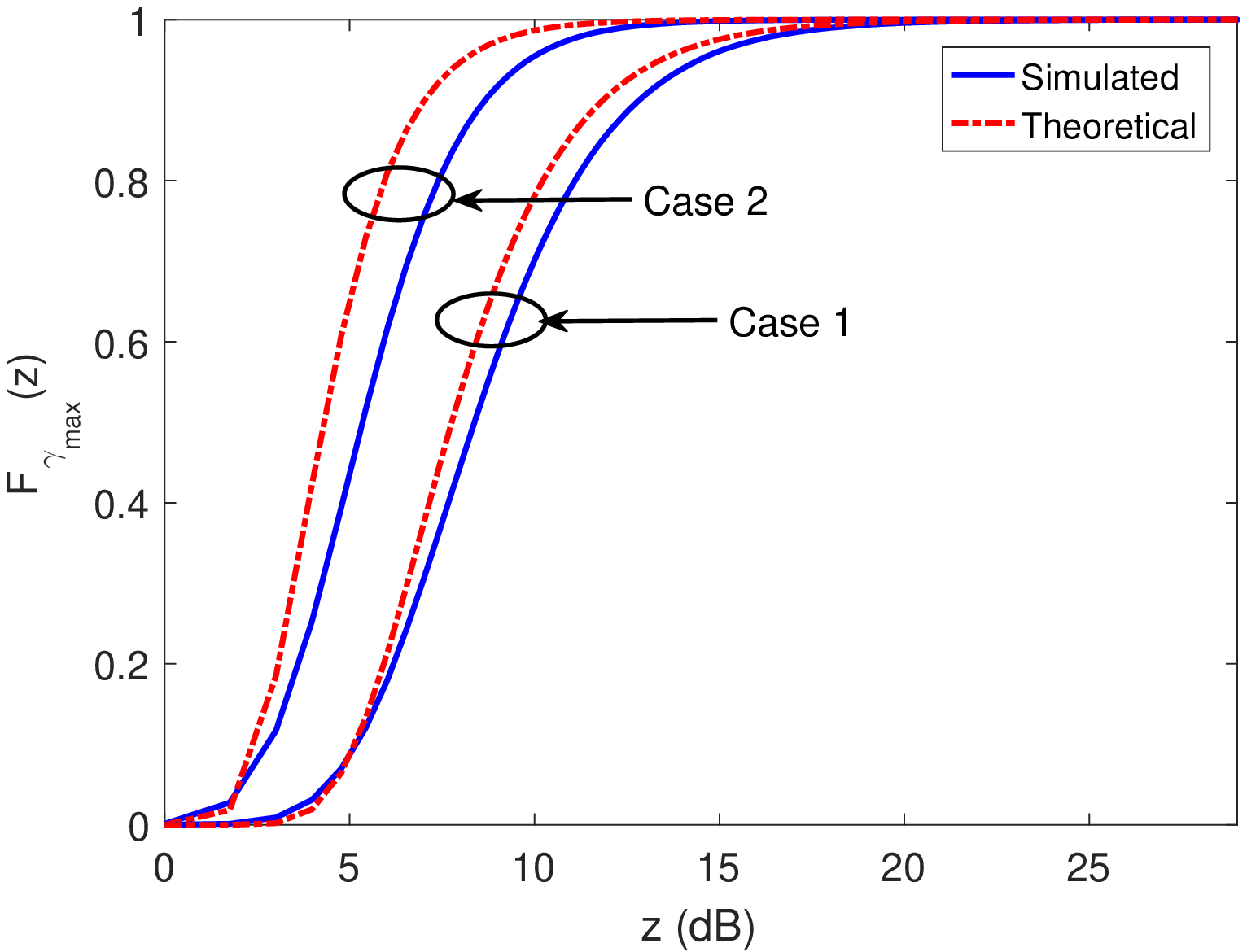}
		\caption{CDF of $\gamma^L_{max}$ for $\kappa-\mu$ shadowed fading with i.n.i.d. interferers, L=64. }
		\label{kus_inid1}
	\end{minipage}
	\begin{minipage}[t]{0.5\textwidth}
		\includegraphics[scale=0.45]{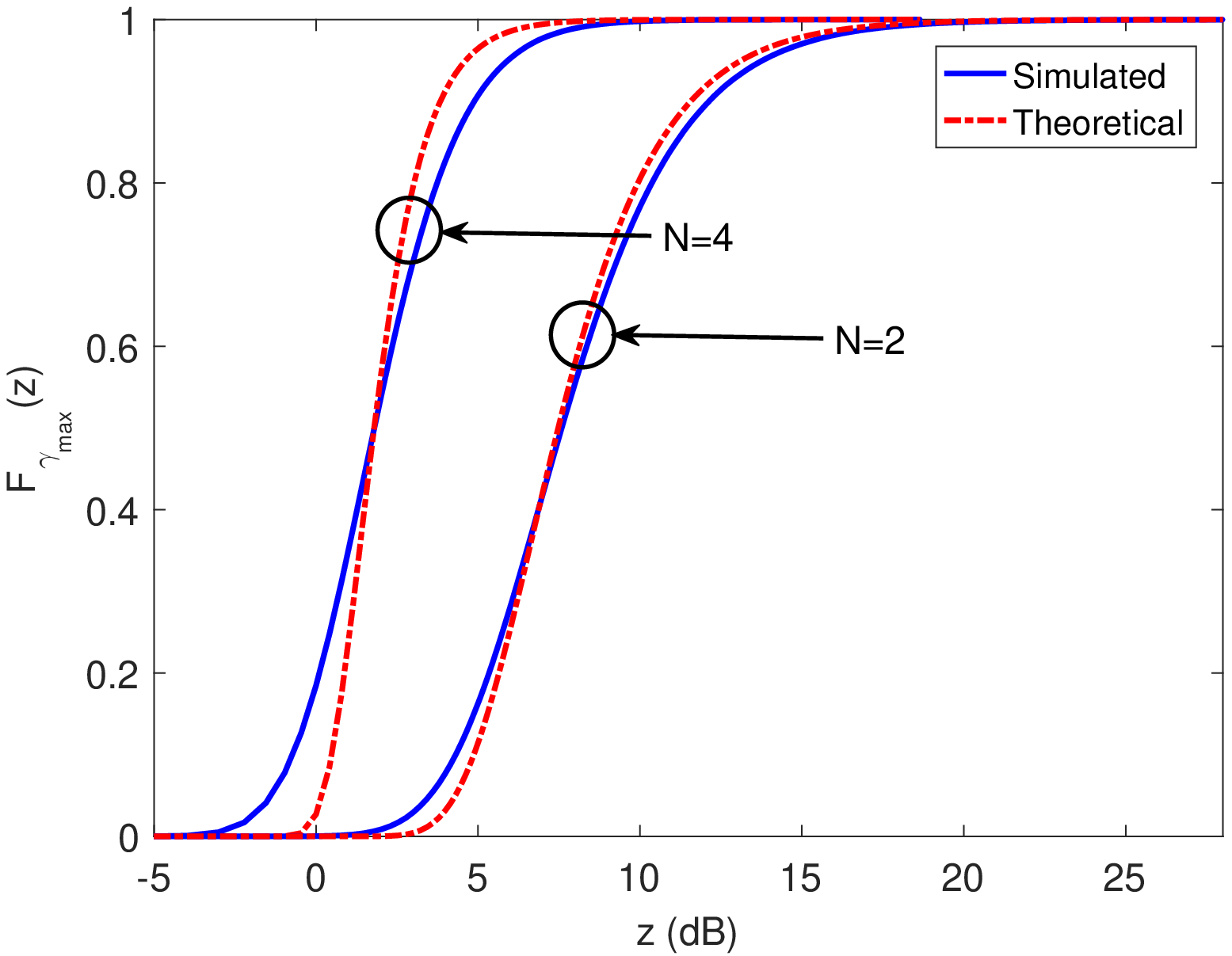}
		\caption{CDF of $\gamma^L_{max}$ for Rayleigh fading with i.i.d. interferers, L=32. }
		\label{rayl1}	
	\end{minipage}
\end{figure}
 Fig. \ref{kus_inid1}, shows the CDFs for the case of $\kappa-\mu$  shadowed fading with i.n.i.d interferers. Here, case 1 corresponds to the scenario where $\kappa=2,\mu=2,m=3$, $\kappa_i=\{ 2,2\}, \mu_i=\{2,1 \},m_i=\{3,2\}$ and case 2 corresponds to $\kappa=2,\mu=1,m=2$ and $\kappa_i=\{ 2,2\}, \mu_i=\{1,1 \},m_i=\{2,1\}$. Fig. \ref{rayl1} show the case of Rayleigh fading channel for different values of $N$ when $L=32$.Fig. \ref{kus_mu_vary1}-\ref{kus_k_vary1_mugm} validates Observation 1 and 2 in Section \ref{analysis}. An increase in $\mu$ or $m$ increases the scale parameter $a_L$. From Lemma \ref{frechetlemma1}, an increase in the scale parameter for a constant shape parameter results in a shift of the CDF to the right. This results in a lesser probability of outage for the same threshold. For clarity, we show the results only for $L=200$. 
 \begin{figure}[H]
	\centering
	\begin{minipage}[t]{0.5\textwidth}
	\includegraphics[scale=0.45]{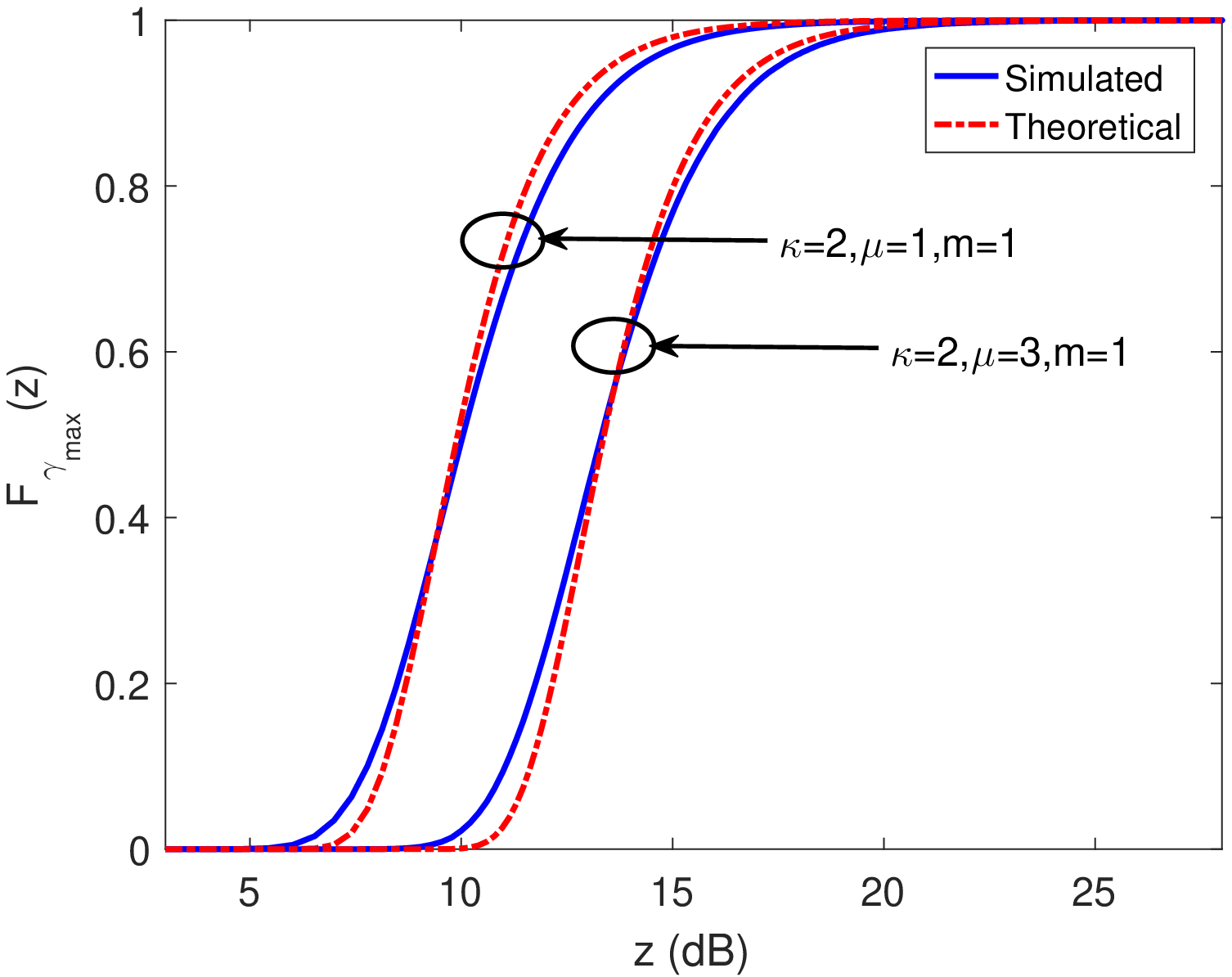}
		\caption{CDF for N=1, L=200, $\kappa_I=2$, $\mu_I=3$, $m=1$ . }
		\label{kus_mu_vary1}
	\end{minipage}
	\begin{minipage}[t]{0.45\textwidth}
		\includegraphics[scale=0.45]{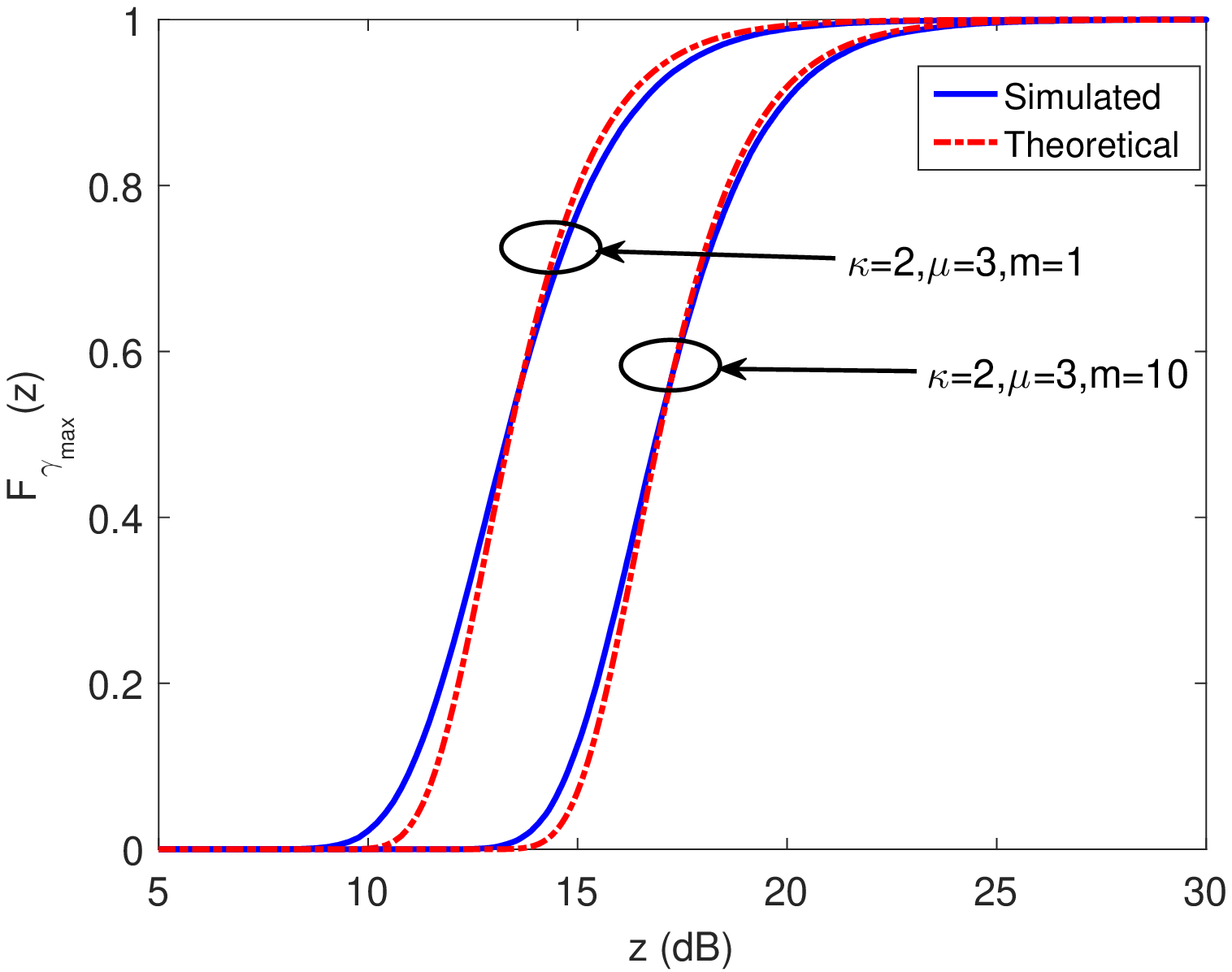}
		\caption{CDF for N=1, L=200, $\kappa_I=2$, $\mu_I=3$, $m=1$ . }
		\label{kus_m_vary1}		
	\end{minipage}
\end{figure}
 \begin{figure}[H]
	\centering
		\begin{minipage}[t]{0.45\textwidth}
	\includegraphics[scale=0.45]{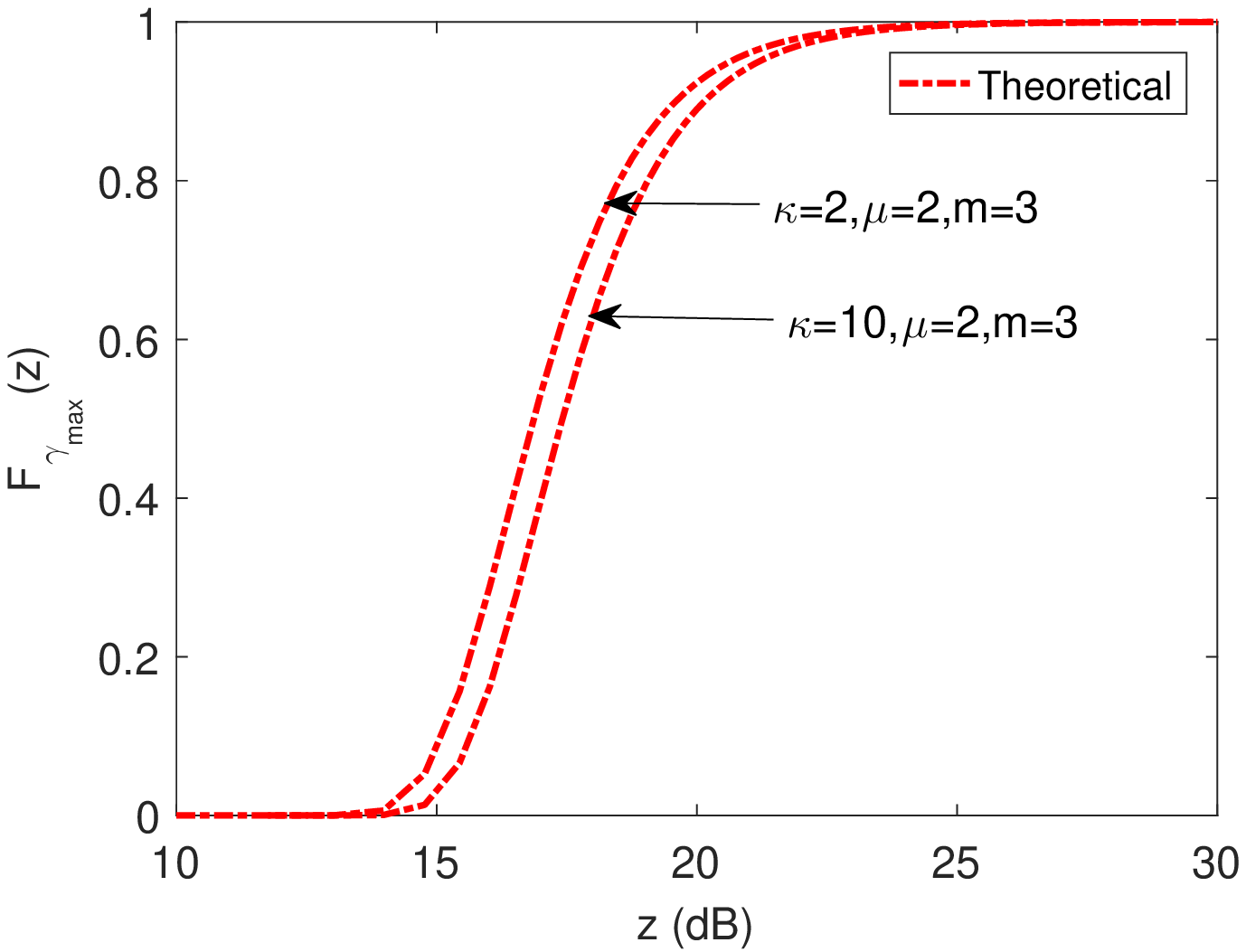}
		\caption{CDF for N=1, L=200, $\kappa_I=2$, $\mu_I=3$, $m=1$ .}
		\label{kus_k_vary1_mgmu}
	\end{minipage}
	\begin{minipage}[t]{0.45\textwidth}
		\includegraphics[scale=0.45]{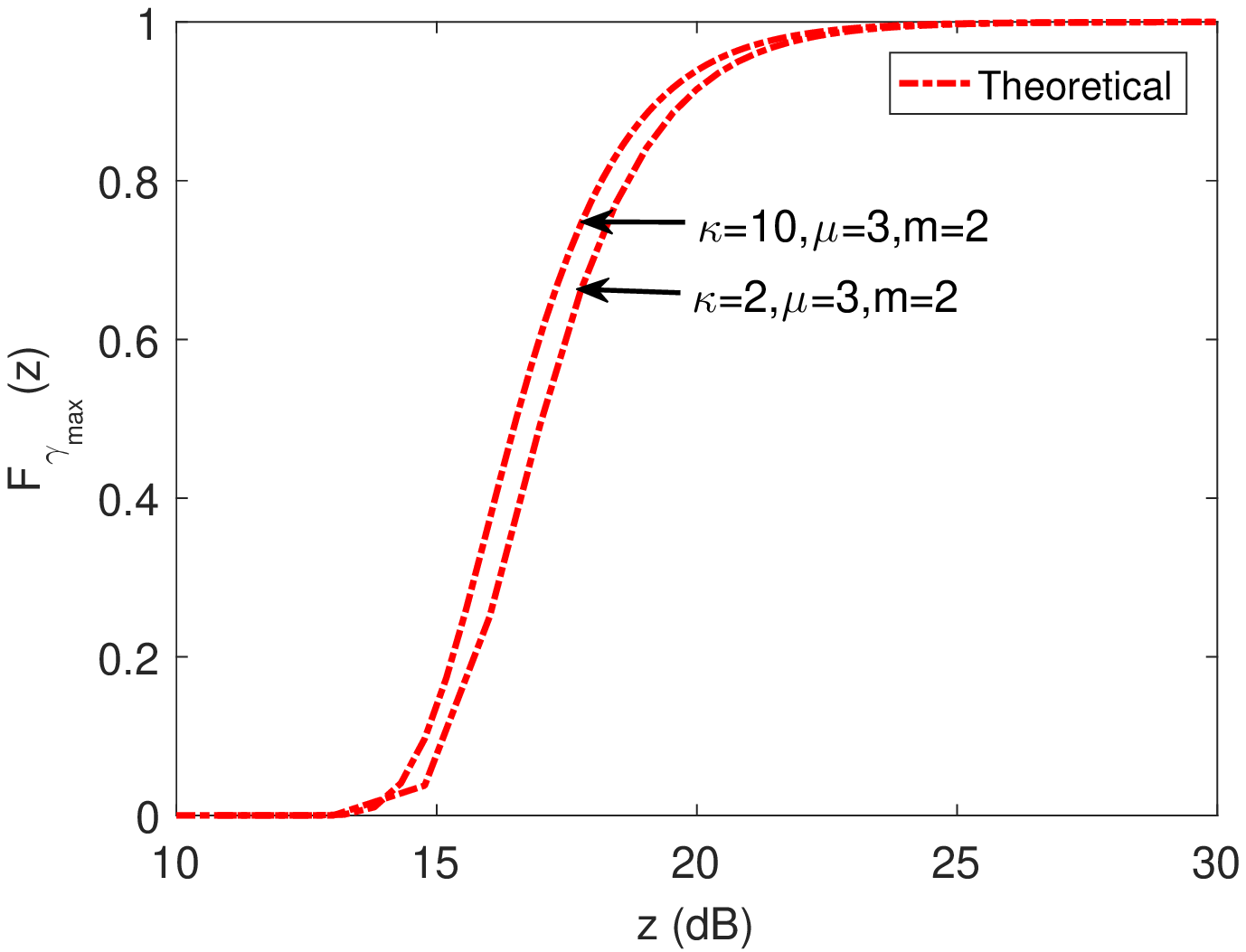}
		\caption{CDF for N=1, L=200, $\kappa_I=2$, $\mu_I=3$, $m=1$ .}
		\label{kus_k_vary1_mugm}		
	\end{minipage}
\end{figure}

Fig. \ref{kus_k_vary1_mgmu} and \ref{kus_k_vary1_mugm} show the variation of CDF with respect to variation in $\kappa$. From \textit{Observation 2} in Section \ref{analysis}, we know that, with an increase in $\kappa$ the scale parameter $a_L$ decreases, if $m-\mu$ is positive. This would result in a decrease in the outage probability. Fig. \ref{kus_k_vary1_mgmu} shows such a scenario and the result agrees with the expected observation. Fig. \ref{kus_k_vary1_mugm} corresponds to a case where $\mu > m$. In this case, it can be observed that an increase in $\kappa$ results in an increase in the outage probability. The change in outage probability with the change in $\kappa$ is not very large. Hence, for clarity, in Figs. \ref{kus_k_vary1_mgmu} and \ref{kus_k_vary1_mugm}, we have given only the theoretical Frechet distribution curves.

Fig. \ref{moment1} compares the simulated and theoretical values of the first moment of $\gamma^{L}_{max}$ for different values of $L$ and $N$. As discussed in Section \ref{sectionB}, the first moment of the asymptotic distribution converge to the first moment of the original distribution of $\gamma^{L}_{max}$. From the results, it is clear that the simulated and theoretical values of expectation get closer as $L$ increases and this convergence is faster for smaller values of $N$. 
\subsubsection{Ergodic rate}
\textbf{The asymptotic ergodic rate of the user in each time slot of the MSC scheduling system is given by} 
\begin{equation}
    R^L_{max} = \mathbb{E}\left[ \log_2(1+\gamma^L_{max})\right]. 
    \label{rate11}
\end{equation}
where $\gamma^L_{max} =max\{\gamma_1,\cdots,\gamma_L\}$. Recall that $\gamma^L_{max}$ converges in distribution to a Frechet RV ${\gamma}_{max}$, i.e., $\gamma^L_{max} \xrightarrow[\text{}]{\text{D}}{\gamma}_{max}$. We still have to prove that $\lim\limits_{L \to \infty} \mathbb{E}[R^L_{max}] = \mathbb{E}[{R}_{max}] $ where $R_{max}=log_2(1+\gamma_{max})$. To prove this, we first utilize continuous mapping theorem, which is given as follows \cite{billingsley2013convergence}: 
\begin{theorem} \label{cts_map}
	Let $\{X_n\}_{n=1}^\infty$ be a sequence of random variables and $X$ another random variable, all taking values in the same metric space $\mathcal{X}$. Let $\mathcal{Y}$ be another metric space and $f:\mathcal{X} \to \mathcal{Y}$ a measurable function and $C_f := \{x : \ f \ is \ continuous \ at \ x\}$. Suppose that $X_n \xrightarrow[\text{}]{\text{D}}X$ and $\mathbb{P}(X \in C_f)=1$, then $f(X_n) \xrightarrow[\text{}]{\text{D}}f(X)$.
\end{theorem}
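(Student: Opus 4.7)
The plan is to prove the continuous mapping theorem by reducing it, via the Portmanteau theorem, to a statement about closed sets and then exploiting continuity of $f$ on a set of full $X$-probability. Recall that convergence in distribution $X_n \xrightarrow{D} X$ in a metric space is equivalent (by Portmanteau) to the condition that $\limsup_{n\to\infty} \mathbb{P}(X_n \in F) \leq \mathbb{P}(X \in F)$ for every closed set $F \subseteq \mathcal{X}$. So the strategy is: fix an arbitrary closed set $G \subseteq \mathcal{Y}$, rewrite $\mathbb{P}(f(X_n) \in G) = \mathbb{P}(X_n \in f^{-1}(G))$, and bound this by a probability involving a closed set in $\mathcal{X}$ to which Portmanteau applies. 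Showing the matching upper bound $\mathbb{P}(X \in f^{-1}(G))$ in the limit will then give the Portmanteau condition for $f(X_n)$ and hence $f(X_n) \xrightarrow{D} f(X)$.

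The key geometric lemma I would establish first is the inclusion
\begin{equation}
\overline{f^{-1}(G)} \subseteq f^{-1}(G) \cup C_f^{c},
\end{equation}
where the overline denotes closure in $\mathcal{X}$. To see this, take $x \in \overline{f^{-1}(G)}$ and assume $x \in C_f$: then there is a sequence $x_k \in f^{-1}(G)$ with $x_k \to x$, so $f(x_k) \in G$ and by continuity of $f$ at $x$ we have $f(x_k) \to f(x)$; since $G$ is closed, $f(x) \in G$, i.e.\ $x \in f^{-1}(G)$. Taking $\mathbb{P}(X \in \cdot)$ of both sides and using the hypothesis $\mathbb{P}(X \in C_f) = 1$, which kills the $C_f^c$ term, yields
\begin{equation}
\mathbb{P}\bigl(X \in \overline{f^{-1}(G)}\bigr) \leq \mathbb{P}(X \in f^{-1}(G)) = \mathbb{P}(f(X) \in G).
\end{equation}

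Combining the ingredients, since $f^{-1}(G) \subseteq \overline{f^{-1}(G)}$ and $\overline{f^{-1}(G)}$ is closed, Portmanteau applied to $X_n \xrightarrow{D} X$ gives
\begin{equation}
\limsup_{n \to \infty} \mathbb{P}(f(X_n) \in G) = \limsup_{n \to \infty} \mathbb{P}(X_n \in f^{-1}(G)) \leq \limsup_{n \to \infty} \mathbb{P}\bigl(X_n \in \overline{f^{-1}(G)}\bigr) \leq \mathbb{P}\bigl(X \in \overline{f^{-1}(G)}\bigr),
\end{equation}
which by the displayed inclusion is at most $\mathbb{P}(f(X) \in G)$. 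Since $G$ was an arbitrary closed subset of $\mathcal{Y}$, the converse direction of Portmanteau delivers $f(X_n) \xrightarrow{D} f(X)$.

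The main obstacle, and the only nontrivial step, is establishing the set inclusion $\overline{f^{-1}(G)} \subseteq f^{-1}(G) \cup C_f^{c}$; everything else is bookkeeping with Portmanteau. Measurability of $f$ is needed so that $f^{-1}(G)$ is a legitimate event, and the full-probability hypothesis on $C_f$ is precisely what lets us discard the potentially bad closure points. For the application at hand in \eqref{rate11}, one would then take $\mathcal{X} = \mathcal{Y} = \mathbb{R}_{\geq 0}$ and $f(x) = \log_2(1+x)$, which is continuous everywhere, so $\mathbb{P}(\gamma_{max} \in C_f) = 1$ holds trivially and the theorem applies to conclude $\log_2(1+\gamma^L_{max}) \xrightarrow{D} \log_2(1+\gamma_{max})$; a separate uniform integrability argument will then be needed to upgrade this to convergence of expectations.
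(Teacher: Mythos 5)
Your proof is correct: the inclusion $\overline{f^{-1}(G)} \subseteq f^{-1}(G) \cup C_f^{c}$ combined with both directions of the Portmanteau theorem is exactly the classical argument for the mapping theorem. The paper does not prove this result itself but defers to Billingsley, where essentially this same Portmanteau-based proof appears, so your approach matches the one the paper relies on.
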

Let $R^L_{max}=log_2(1+\gamma^L_{max})$.  Since $f(x)=log_2(1+x)$ is a continuous function, using Theorem. \ref{cts_map}, $R^L_{max} \xrightarrow[\text{}]{\text{D}}{R}_{max}$. Finally, we use monotone convergence theorem, which is given below \cite{billingsley2008probability}.
\begin{theorem} \label{mct}
	Let $g_n \geq 0$ be a sequence of measurable functions such that $g_n(\omega) \to g(\omega) \ \forall \ \omega$ except maybe on a measure zero set and $g_n(\omega) \leq g_{n+1}(\omega), \ n \geq 1$. We then have
	\begin{equation}
		\lim\limits_{n \to \infty} \int g_n \ d\mu = \int g \ d\mu.
	\end{equation}
\end{theorem}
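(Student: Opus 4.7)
The plan is to prove the equality by establishing both inequalities $\lim_n \int g_n \, d\mu \leq \int g \, d\mu$ and $\lim_n \int g_n \, d\mu \geq \int g \, d\mu$. The first direction is the easy half: the hypothesis $g_n \leq g_{n+1}$ together with monotonicity of the integral makes $\int g_n \, d\mu$ a non-decreasing sequence in $[0,\infty]$, so the limit $L := \lim_n \int g_n\, d\mu$ exists; and since $g_n \leq g$ almost everywhere, the same monotonicity yields $L \leq \int g \, d\mu$ immediately.

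The substantive half is the reverse inequality, and here I would reduce to simple functions by invoking the definition $\int g\, d\mu = \sup\{\int s\, d\mu : 0 \leq s \leq g,\ s\text{ simple}\}$. Fix an arbitrary simple $s = \sum_{i=1}^{k} a_i \mathbf{1}_{A_i}$ with $s \leq g$ a.e., and fix a constant $c \in (0,1)$. Define the level sets $E_n = \{\omega : g_n(\omega) \geq c\, s(\omega)\}$. Because $g_n$ is non-decreasing and $g_n \to g \geq s > cs$ at every point where $s(\omega) > 0$, the sets $E_n$ are nested and $E_n \uparrow \Omega$ up to a null set. Monotonicity of the integral then yields
\[
\int g_n \, d\mu \;\geq\; \int_{E_n} g_n \, d\mu \;\geq\; c \int_{E_n} s \, d\mu,
\]
and continuity of measure from below applied to each $A_i \cap E_n \uparrow A_i$ forces $\int_{E_n} s\, d\mu \to \int s\, d\mu$. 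Sending $n \to \infty$ gives $L \geq c \int s\, d\mu$, and then letting $c \uparrow 1$ gives $L \geq \int s\, d\mu$. Taking the supremum over all admissible simple $s$ closes the inequality.

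The main obstacle is the bookkeeping around the exceptional null set on which $g_n \to g$ or $s \leq g$ may fail: one has to either redefine $g_n, g, s$ on this set (preserving all integrals) so that the pointwise inequalities hold everywhere, or carry the ``almost everywhere'' qualifier carefully through each application of continuity of measure. A secondary technical case is $\mu(A_i) = \infty$ for some index with $a_i > 0$, in which $\int s\, d\mu = \infty$; there one argues directly that $\mu(A_i \cap E_n) \to \infty$ by continuity of measure, which forces $L = \infty$, so both sides of the desired equality equal $+\infty$ and the conclusion still holds.
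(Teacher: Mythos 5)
Your proof is correct: it is the classical argument for the monotone convergence theorem (monotonicity of the integral for $\lim_n \int g_n \, d\mu \leq \int g \, d\mu$, and for the reverse inequality the comparison with a simple function $s \leq g$, the constant $c \in (0,1)$, the increasing level sets $E_n = \{g_n \geq c\,s\}$, continuity of measure from below, then $c \uparrow 1$ and the supremum over simple functions), with the null-set bookkeeping correctly flagged. The paper itself offers no proof, simply citing the textbook of Billingsley, and your argument is essentially the standard proof given there, so there is nothing to reconcile.
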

Here, we know that $\gamma^L_{max} \leq \gamma^{L+1}_{max}, \ \forall \ L$ and hence $\mathbb{P}(\gamma^L_{max} \leq l) \geq \mathbb{P}(\gamma^{L+1}_{max} \leq l)$. Thus, $1-F_{\gamma^L_{max}}(l) \leq 1-F_{\gamma^{L+1}_{max}}(l)$. Logarithm is a monotonic function and hence  $1-F_{R^L_{max}}(l) \leq 1-F_{R^{L+1}_{max}}(l)$. For a positive RV $X$, note that the expectation is given by 
\begin{equation}
	\mathbb{E}[X] = \int\limits_{0}^{\infty} \mathbb{P}(X>x) \ dx = \int\limits_{0}^{\infty} (1-F_X(x)) \ dx.
\end{equation} 
Thus, making use of Theorem. \ref{mct} we have $\lim\limits_{L \to \infty} \mathbb{E}[R^L_{max}] = \lim\limits_{L \to \infty} \int\limits_{0}^{\infty}  \mathbb{P}(R^L_{max}>l) \ dl= \int\limits_{0}^{\infty} \lim\limits_{L \to \infty} \mathbb{P}(R^L_{max}>l) \ dl = \mathbb{E}[{R}_{max}] $. Hence, we have the required result. Given that we have proved the convergence of moments of $\mathlarger{R^L_{max}}$ to the moments of $\mathlarger{{R}_{max}}$, for large $L$, the expectation in (\ref{rate11}) can now be evaluated using the CDF of Frechet RV given in (\ref{asymp_cdf}), instead of using the true CDF of $\mathlarger{\gamma^L_{max}}$, which is difficult to evaluate. The asymptotic ergodic rate is thus given by $R = \int\limits_0^{\infty} \log_2(1+z) \ {f}_{{\gamma}_{max}}(z) \ dz,$ where $\mathlarger{{f}_{{\gamma}_{max}}(z)}$ is the asymptotic PDF of the Frechet RV $\mathlarger{\gamma^L_{max}}$. Substituting the Frechet PDF, the previous expression can be rewritten as follows :
\begin{equation}
    R = \beta(a_L)^\beta \ \int\limits_{0}^\infty \log_2(1+z)z^{-\beta-1}e^{-\left(\frac{z}{a_L} \right)^{-\beta}}  \ dz. 
    \label{rate_theo_al}
\end{equation}
 \begin{figure}[H]
	\centering
	\begin{minipage}[t]{0.5\textwidth}
	\includegraphics[scale=0.5]{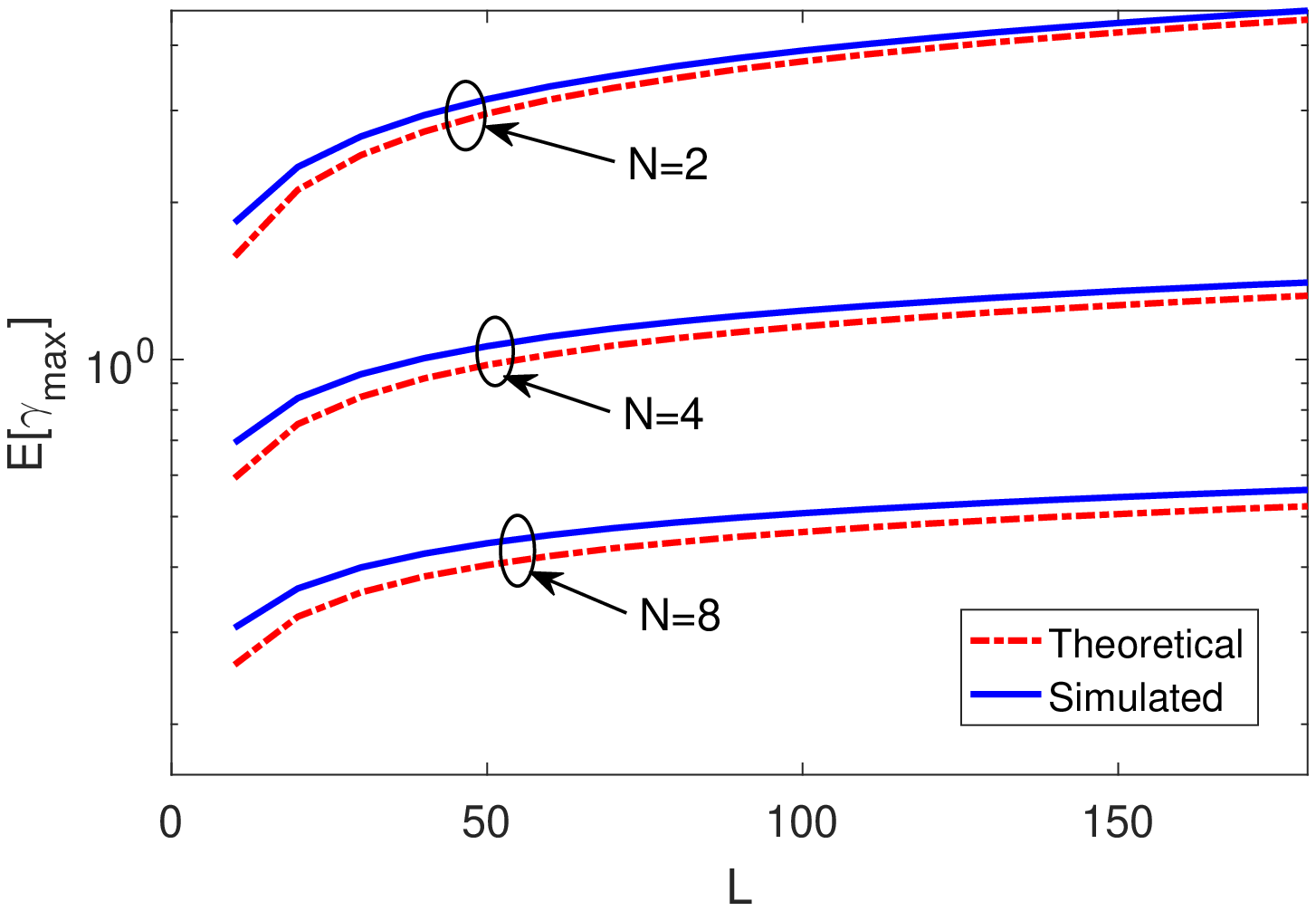}
		\caption{L vs $\mathbb{E}[\gamma^L_{max}]$ for $\kappa=2$, $\bar{\mu}=3$, $m=2$, $\kappa_I=2$, $\mu_I=2$, $m_I=3$. }
		\label{moment1}
	\end{minipage}
	\begin{minipage}[t]{0.5\textwidth}
		\includegraphics[scale=0.5]{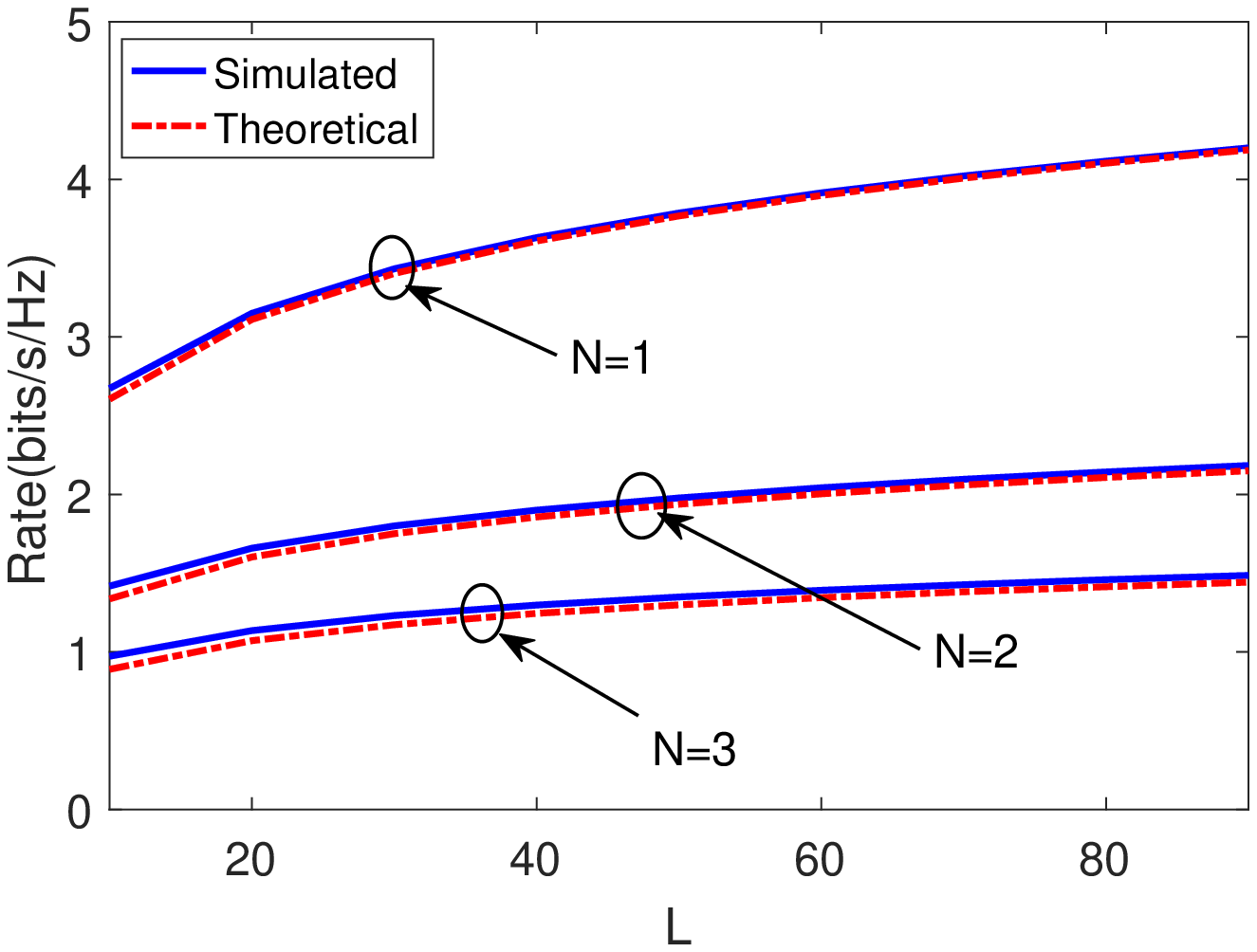}
		\caption{L vs ergodic rate for $\{\kappa=2,\mu=3,m=2\}$,$\{\kappa_I=2,\mu_I=2,m_I=3\}$. }
		\label{rate1}		
	\end{minipage}
\end{figure}
From (\ref{rate_theo_al}), we obtain the following observation : \\
\textbf{\textit{Observation 3 :}} \textbf{Ergodic rate increases with increase in $a_L$, for a constant shape parameter $\beta$.}\\
Note that both $\beta$ and $z$ are non-negative and hence $\mathlarger{\log_2(1+z)z^{-\beta-1}}$ will also be non-negative for all values of $\beta$ and $z$. Hence, with an increase in $a_L$, the rate increases. \textbf{\textit{Observation 1} and \textit{Observation 2} along with \textit{Observation 3} will facilitate obtaining inferences on the variations of the asymptotic data rate with respect to the $\kappa-\mu$ shadowed fading parameters. }

\par Fig. \ref{rate1} compares simulated values of the rate with the rate computed using (\ref{rate_theo_al}), for different values of $L$ and $N$. The results show that there is a good match between the simulated and theoretical values over a wide range of $L$ and $N$. As expected, the rate increases with the number of antennas at the receiver. However, with an increase in the number of interferers SIR signals of smaller magnitude are available at the receivers and hence the rate reduces significantly. 
 \begin{figure}[H]
	\centering
		\begin{minipage}[t]{0.5\textwidth}
	\includegraphics[scale=0.5]{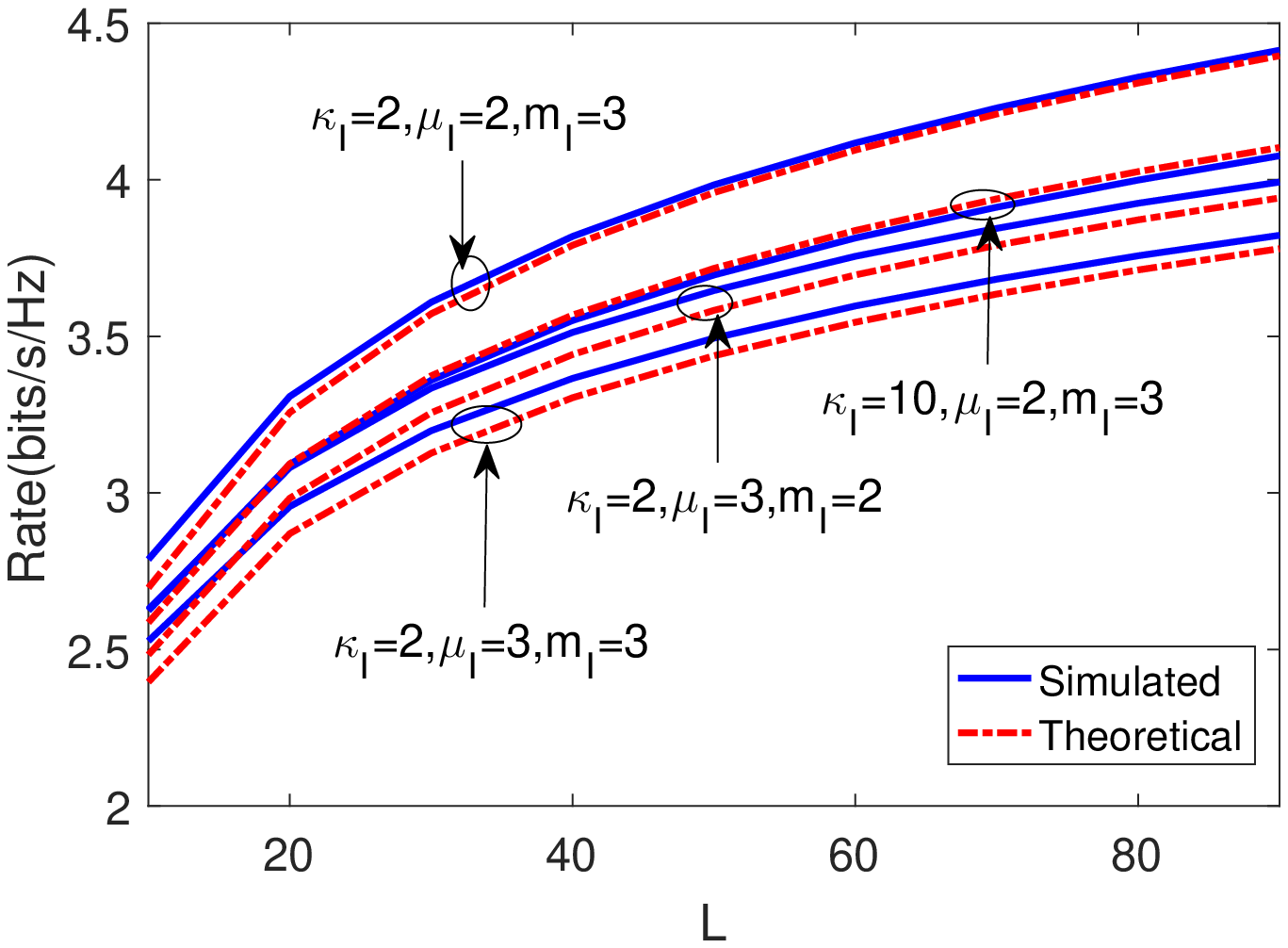}
		\caption{L vs ergodic rate for $\{\kappa=2,\mu=1,m=1\}$, N =1.}
		\label{rate2}
	\end{minipage}
	\begin{minipage}[t]{0.5\textwidth}
		\includegraphics[scale=0.5]{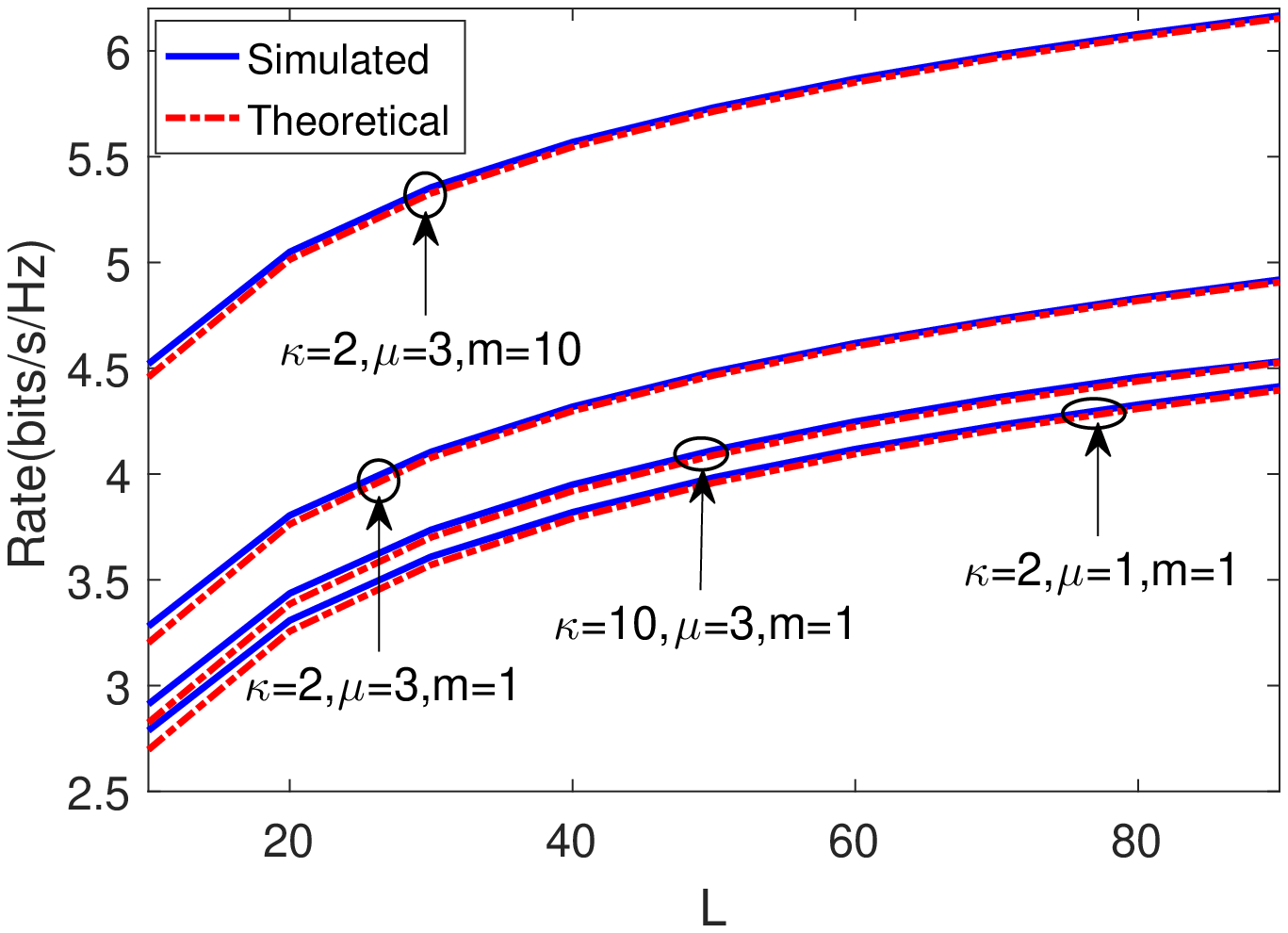}
		\caption{L vs ergodic rate for $\{\kappa_I=2,\mu_I=2,m_I=3\}$, N =1.}
		\label{rate3}		
	\end{minipage}
\end{figure}
Fig. \ref{rate2} further shows variation in the rate for changes in the interferer fading parameters (here, interferers are assumed to be i.i.d.). Increase in $\mu_I,m_I$ results in stronger interferers and hence results in a decrease in the rate. Further, Fig. \ref{rate3} shows the variation in rate for different values of source fading parameters. As discussed previously for the case of outage probability, increase in $m$, $\mu$ results in better coverage conditions and hence higher rates. Also, similar to the case of outage probability, the variation of the rate with respect to $\kappa$ depends on the sign of $\mu-m$. 

\subsection{Upper bound for the rate in FAS (Full Antenna Selection) architecture in massive MIMO}
In massive MIMO scenarios, there is often a restriction to the number of RF chains available for processing. Dedicated RF chains for each antenna in the massive MIMO antenna array is a cost prohibitive and power hungry design \cite{gao2016antenna}. Hence, a very common practice is to choose a subset of antennas from the array for further processing. This is known as the full array switching (FAS) architecture. Very recently, the authors of \cite{gao2016antenna,gao2018massive} discussed the analysis of FAS systems in a noise-limited scenario using EVT, when the channel experiences Rayleigh fading. In this subsection, we derive upper bounds on the rate for FAS architecture in an interference-limited scenario for $\kappa-\mu$ shadowed fading channels. Note that our results will generalize the FAS results of \cite{gao2018massive}. \\ Consider a multi-antenna receiver with $L$ antennas, out of which $L_s$ antennas are selected for further processing. Upper bound on the rate in a FAS scenario would correspond to the condition where the first $L_s$ antennas in terms of SIR are selected. The corresponding bound is given by \cite{gao2018massive} :
\begin{equation}
    R_{ub,fas} = \mathbb
    {E} \left[\sum\limits_{l=1}^{L_s}log_2(1+\gamma^L_{(l)}) \right],
    \label{rate_ub_fas}
\end{equation}
where $\{\gamma^L_{(l)} \}_{l=1,2,\cdots,L}$ are the ordered SIR RVs; ie. $\gamma^L_{(1)}>\gamma^L_{(2)}>\cdots>\gamma^L_{(L)}>0$. The convergence of the above moment to the first moment of the limiting distribution is guaranteed by extending the same claims used in the previous section to prove convergence of rate in MSC scheduling systems. Now, to compute the ergodic rate as in (\ref{rate_ub_fas}), we need the joint distribution of the first $L_s$ SIR RVs. The exact expression for the joint pdf of $L_s$ ordered i.i.d. random variables $\{x_{(1)}\geq x_{(2)}\geq \cdots\geq x_{(L_s)} \}$ is given by
\begin{equation}
    f_{x_{(1)},x_{(2)},\cdots,x_{(L_s)}}(x_{(1)},x_{(2)},\cdots,x_{(L_s)}) = L_s! \prod\limits_{l=1}^{L_s}f(x_{(l)}),
    \label{joint_true}
\end{equation} where $f(x)$ is the pdf of each of the RV $x_{(i)}; i=1,\cdots,L.$
The evaluation of (\ref{rate_ub_fas}) using the true joint distribution as given in (\ref{joint_true}) will result in a very complex expression due to the complicated nature of the $L_s$ product terms and the $L_s$ fold integration. In fact, even the evaluation of a single term of the product using (\ref{pdf_exp}) will take close to an hour in Mathematica. However, we make use of the following result from EVT to derive the asymptotic joint distribution of the first $L_s$ RVs.
\begin{lemma}
Given a sequence of i.i.d RVs $X_1,X_2,\cdots,X_L$ with a common CDF $F(x)$ that belongs to the MDA of one of the three EVD $G(x)$ such that $\frac{max(X_1,X_2,\cdots,X_L)-b_L}{a_L} \xrightarrow{D}G(x)$. Suppose that $X_{(1)}>X_{(2)}>\cdots>X_{(L)}$ is the ordered sequence of $X_1,X_2,\cdots,X_L$ then the $L_s$ dimensional vector $\left( \frac{X_{(1)}-b_L}{a_L}, \frac{X_{(2)}-b_L}{a_L},\cdots, \frac{X_{(L_s)}-b_L}{a_L}\right)$ has the following asymptotic joint distribution :
\begin{equation}
    g_{1,2,\cdots,L_s}(x_{(1)}, x_{(2)}, \cdots, x_{(L_s)})=G(x_{(L_s)})\prod\limits_{l=1}^{L_s}\frac{g(x_{(l)})}{G(x_{(l)})},
    \label{joint_dist_ordered_RV}
\end{equation}
with $x_{(1)}>x_{(2)}>\cdots>x_{(L_s)}$ and $g(x)$ is the pdf of $max(X_1,X_2,\cdots,X_L)$.
\end{lemma}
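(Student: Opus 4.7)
The plan is to obtain the joint asymptotic density by taking the termwise limit of the exact joint density of the top $L_s$ order statistics. Starting from the classical formula for the joint density of $(X_{(1)},\ldots,X_{(L_s)})$ in decreasing order,
\begin{equation*}
f(x_1,\ldots,x_{L_s})=\frac{L!}{(L-L_s)!}\,F(x_{L_s})^{L-L_s}\prod_{l=1}^{L_s}f(x_l), \quad x_1>\cdots>x_{L_s},
\end{equation*}
I would apply the affine change of variables $y_l=(x_l-b_L)/a_L$, whose Jacobian contributes $a_L^{L_s}$. The density of the normalized vector $\bigl((X_{(1)}-b_L)/a_L,\ldots,(X_{(L_s)}-b_L)/a_L\bigr)$ then takes the form
\begin{equation*}
h_L(y_1,\ldots,y_{L_s})=\frac{L!/(L-L_s)!}{L^{L_s}}\,F(a_L y_{L_s}+b_L)^{L-L_s}\prod_{l=1}^{L_s}L\,a_L\,f(a_L y_l+b_L),
\end{equation*}
each of whose three groups of factors admits a clean limit as $L\to\infty$.

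Next I would handle each group separately. The combinatorial prefactor satisfies $L!/((L-L_s)!\,L^{L_s})=\prod_{j=0}^{L_s-1}(1-j/L)\to 1$ for $L_s$ fixed. The MDA hypothesis gives $F(a_L y+b_L)^L\to G(y)$, and combining this with the boundedness of $F$ yields $F(a_L y_{L_s}+b_L)^{L-L_s}\to G(y_{L_s})$ on the support of $G$. For the product, I would observe that differentiating the CDF convergence gives $\frac{d}{dy}F^L(a_L y+b_L)=L\,a_L f(a_L y+b_L)\,F^{L-1}(a_L y+b_L)\to g(y)=G'(y)$; dividing through by $F^{L-1}(a_L y+b_L)\to G(y)$ then yields $L\,a_L f(a_L y+b_L)\to g(y)/G(y)$. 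Multiplying the three limits produces exactly $G(y_{L_s})\prod_{l=1}^{L_s} g(y_l)/G(y_l)$, as claimed.

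The main obstacle will be rigorously justifying the pointwise density convergence $L\,a_L f(a_L y+b_L)\to g(y)/G(y)$, because in general convergence of CDFs does not entail convergence of the associated densities. For distributions in the Frechet MDA with a smooth, regularly varying tail density — a property inherited by the $\kappa-\mu$ shadowed SIR CDF through the analysis of Section \ref{sectionA} — this strengthening is available uniformly on compact subsets of $(0,\infty)$, so the interchange is legitimate here. A cleaner alternative route, which I would pursue if the density argument becomes technically heavy, is to first establish the claim at the level of joint CDFs using the decreasing-order-statistic representation $\mathbb{P}(X_{(1)}\le x_1,\ldots,X_{(L_s)}\le x_{L_s})$, pass to the limit there by reducing to powers of $F$, and then differentiate the limiting joint CDF to recover the stated joint density, thereby bypassing the limit-derivative interchange entirely.
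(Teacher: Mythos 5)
The paper does not actually prove this lemma: it is quoted verbatim from the order-statistics literature and the ``proof'' is a pointer to page 219 of \cite{arnold1992first}. Your proposal reconstructs essentially that textbook argument, and it is sound in outline: the exact joint density of the top $L_s$ order statistics, the affine normalization with Jacobian $a_L^{L_s}$, the factorization into the combinatorial prefactor $\prod_{j=0}^{L_s-1}(1-j/L)\to 1$, the factor $F(a_Ly_{L_s}+b_L)^{L-L_s}\to G(y_{L_s})$, and the per-coordinate factor $L\,a_L f(a_Ly_l+b_L)\to g(y_l)/G(y_l)$ are all correct and multiply to the claimed limit. You also correctly identify the only genuine gap: the step where you ``differentiate the CDF convergence'' is precisely the limit--derivative interchange that weak convergence alone does not license, so pointwise density convergence must be imported as an extra hypothesis (a von Mises--type or local-uniform density condition). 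In the setting of this paper that hypothesis is actually available concretely: the SIR density expansion in Appendix \ref{appendixc}, Eqn.\ (\ref{fz_sir1}), places $f_\gamma$ in a $\delta$-neighborhood of the Frechet GPD, which gives the required locally uniform convergence $L\,a_Lf_\gamma(a_Ly)\to g(y)/G(y)$ on compacts of $(0,\infty)$; adding one line invoking Scheff\'e's theorem then upgrades pointwise density convergence to convergence of the joint distribution, which is what the lemma asserts. Your alternative route through the joint CDF of $(X_{(1)},\ldots,X_{(L_s)})$ (a finite sum of powers of $F$, whose limit is explicit and smooth and can then be differentiated) is likewise legitimate and avoids the interchange; either completion would make the argument self-contained, which is more than the paper itself offers.
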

\begin{proof}
	Please refer to page 219 in \cite{arnold1992first} for the proof.
\end{proof}
 \begin{figure}[H]
	\centering
		\includegraphics[scale=0.45]{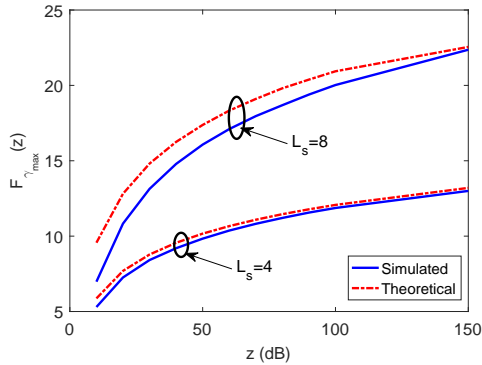}
		\caption{Asymptotic upper bound of rate and simulated rate of FAS systems in Rayleigh fading.  }
		\label{fas_rayl}
\end{figure}

The corresponding joint distribution in our case is as follows :
\begin{align}
        f_{x_{(1)},x_{(2)},\cdots,x_{(L_s)}}(x_{(1)},x_{(2)},\cdots,x_{(L_s)}) =  \left(\beta (a_L)^{\beta+1} \right)^{L_s} \ exp\left(-\left( \frac{x_{(L_s)}}{a_L}\right)^{-\beta} \right) \prod\limits_{l=1}^{L_s} \left(x_{(l)} \right)^{-1-\beta}.
        \label{joint_frechet}
\end{align}

 \begin{figure}[H]
	\centering
	\begin{minipage}[t]{0.45\textwidth}
		\includegraphics[width=\textwidth]{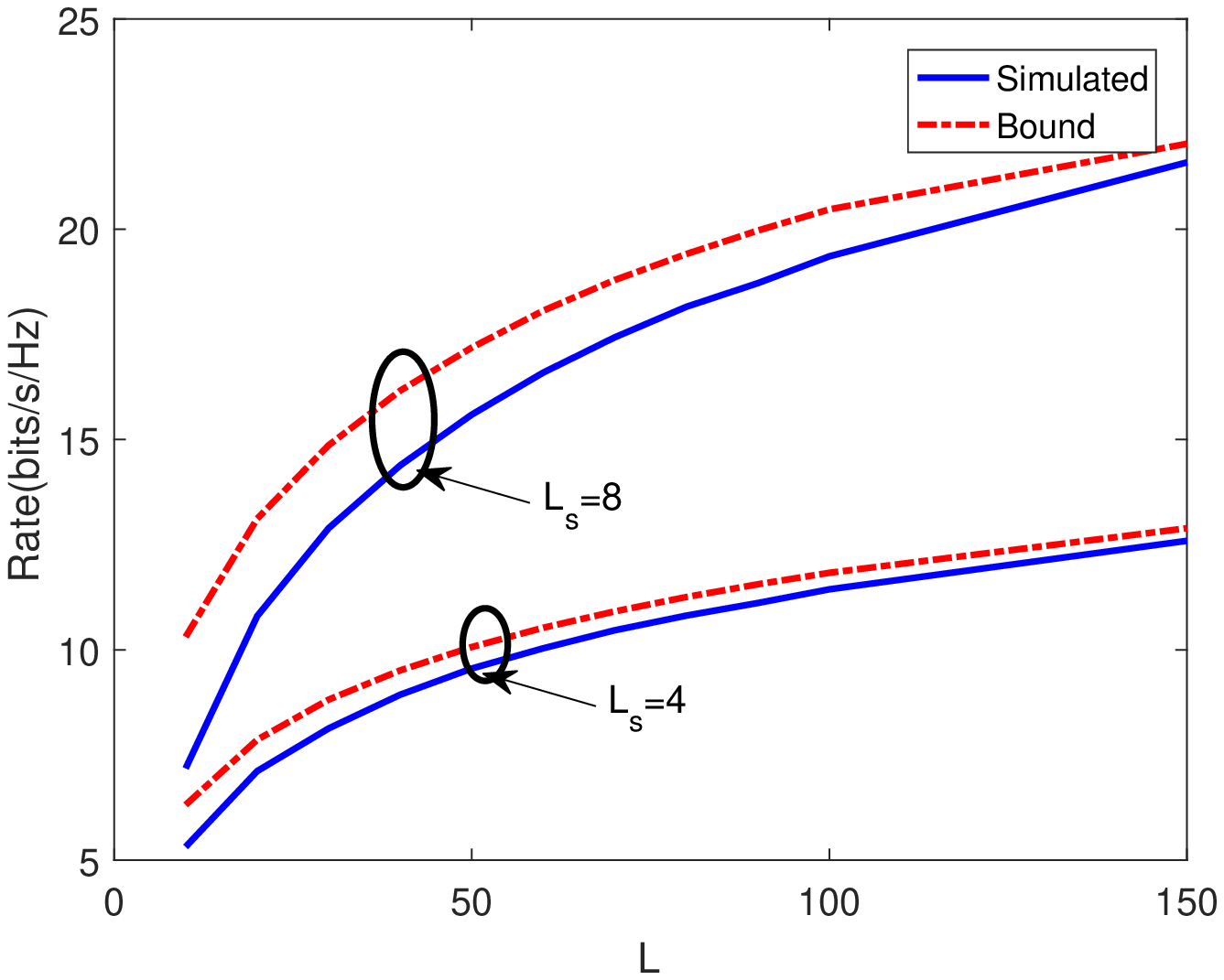}
		\caption{Asymptotic upper bound of rate and simulated rate of FAS systems in $\kappa-\mu$ shadowed fading with i.i.d. interferers. }
		\label{fas_kus_iid}
	\end{minipage}
	\hfill
	\begin{minipage}[t]{0.45\textwidth}
			\includegraphics[width=\textwidth]{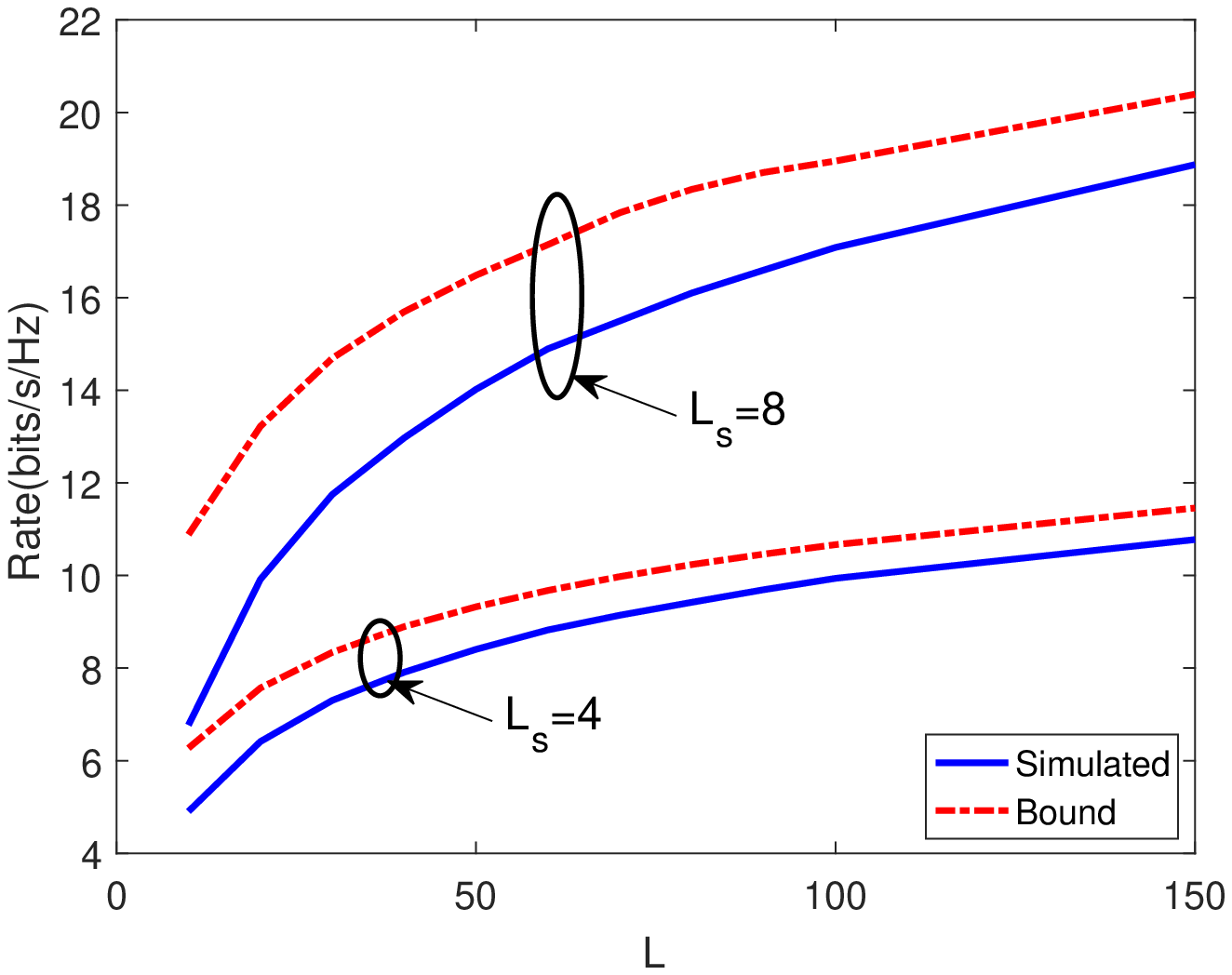}
		\caption{Asymptotic upper bound of rate and simulated rate of FAS systems in $\kappa-\mu$ shadowed fading with i.i.d. interferers.  }
		\label{fas_kus_inid}
	\end{minipage}
\end{figure}
Now, using the expression for the joint distribution of ordered RVs in (\ref{joint_frechet}) we can compute the upper bound for the rate easily. We show typical simulation results to validate the bounds for FAS systems. Here, the best 4 or 8 antennas are selected for further processing in each case. We can see that the bound is tight for the case of $L_s=4$ and the bound gets looser for larger values of $L_s$. Here, the asymptotic statistics of the ordered RVs allows easy evaluation of the upper bound on the rate.
Fig. \ref{fas_rayl} shows the asymptotic bound and the simulated rate with different antenna lengths, for the case of FAS architectures in Rayleigh fading environment. Further, Fig. \ref{fas_kus_iid} and Fig. \ref{fas_kus_inid} shows the simulated and asymptotic bound for the rate in FAS system in a $\kappa-\mu$ shadowed fading environment. The fading parameters chosen for the simulations in Fig. 12 and Fig. 13 are $\kappa=2$, $\mu=3$, $m=1$, $\kappa_I=2$, $\mu_I=1$, $m_I=1$ and  $\kappa=2$, $\mu=3$, $m=1$, $\kappa_1=2$, $\mu_1=1$, $m_1=1$ and $\kappa_2=2$, $\mu_2=2$, $m_2=1$ respectively. 

\subsection{Other applications:}
Besides the performance analysis of MSC scheduling and FAS systems, the results derived in our work can also be used in other applications. One can derive the asymptotic outage probability and ergodic rate of the maximum SIR antenna (i.e., selection combining) in a massive MIMO scenario. These metrics describe the performance of the best antenna from the antenna array under a given channel condition and hence are useful for system characterization and resource allocation, especially when limited number of high-resolution RF chains are available at the receiver \cite{fan2015uplink,srinivasan2019analysis}. Also, note that the asymptotic distribution of the $k$th maximum SIR RV can be derived easily from the distribution of the maximum RV using \cite[Eqn. (8.4.2)]{arnold1992first}. This can be used for the performance analysis of a general selection-diversity (SD) scheme as in \cite{al2018asymptotic}. One can also determine the joint distribution of the largest and second largest SIR RV and determine performance bounds for generalized selection combining (GSC) schemes in massive MIMO receivers \cite{kong2000snr,annamalai2001error,chen2004unified}. Also, our asymptotic distribution of the maximum SIR can be used for the analysis of receive antenna selection schemes in spectrum sharing systems \cite{hanif2018antenna}. 

\section{Conclusions} \label{conclusion}
We considered $L$ wireless links in the presence of $N$ co-channel interferers experiencing non-identical $\kappa-\mu$ shadowed fading conditions with $\{\kappa,\mu,m,\bar{x} \}$ and $\{\kappa_i,\mu_i,m_i,\bar{y_i};i=1,\cdots,N \}$ being the fading parameters of the source and interferers respectively. Let, $\gamma=\gamma_j, \ \forall \ j \ \in \ \{1,\cdots,L\}$ represent the SIR available at each user/antenna and $\gamma^L_{max}$ is the maximum over all $\gamma_j$. Following are our key results and observations :
\begin{itemize}
    \item The asymptotic distribution of $\gamma^L_{max}$ is a Frechet distribution with scale parameter $a_L=F_{\gamma}^{-1}(1-L^{-1})$ and shape parameter $\beta=\sum\limits_{i=1}^N\mu_i$.
    \item The $\nu^{th}$ moment of $\gamma^L_{max}$ converges to the $\nu^{th}$ moment of the corresponding Frechet distribution for all $\nu \ < \ \sum\limits_{i=1}^{N}\mu_i$. 
    \item 	The rate of convergence of $F_{\gamma^L_{max}}(z)$ to the Frechet distribution is $O\left( L^{-\left(\sum\limits_{i=1}^{N}\mu_{i} \right)^{-1}} + L^{-1} \right)$. From this result, we can observe that the maximum deviation between the true distribution of the maximum and the asymptotic distribution of the maximum over all the points decreases with an increase in $L$. Also, in order to demonstrate the practical validity of the work, we studied the empirical KL divergence between the exact distribution of the maximum and the corresponding asymptotic distribution. The KL divergence results indicated the closeness between the asymptotic results and the exact results, even for finite $L$.
    \item Using results from stochastic ordering, the variation in the behaviour of the asymptotic outage probability and asymptotic ergodic rate with respect to variations in the source fading environment is studied.
    \item We analyze the utility of the derived asymptotic results in the following applications : 
	    \begin{enumerate}[label=\roman*]
	        \item Analysis of asymptotic outage probability and asymptotic ergodic rate of the user in each time slot of an MSC system.
	        \item Derivation of the asymptotic upper bound on the rate in FAS architectures for antenna selection
	    \end{enumerate}
\end{itemize}
	        
Further, simulations are provided to validate the above results and to confirm the utility of our results. As the $\kappa-\mu$ shadowed fading model is a generalized fading model encompassing most of the general fading scenarios as special cases, our results can be used in a number of problem scenarios involving maxima statistics.

\begin{appendices}
\section{Proof for Theorem \ref{mda_frechet}}\label{proof_mda}
	$\mathlarger{F_{\gamma}(z)}$ belongs to the $MDA$ of the Frechet distribution, if it satisfies (\ref{codn_frec1}). The CDF of the SIR RV $\gamma$ is given by $F_{\gamma}(z)=\mathbb{P}(\gamma\leq z),$, where $\mathlarger{\mathbb{P}(.)}$  represents the probability of an event. Note that, this is equivalent to the expression for outage probability with a threshold $z$. The expression for outage probability in a $\kappa-\mu$ shadowed interference-limited scenario is given in \cite[Eqn.(6)]{kumar2017outage} \footnote{Note that there was an typo in the equation in the original version of \cite{kumar2017outage} and the correct expressions are used above. An errata for \cite{kumar2017outage} has also been communicated.}. An equivalent expression for the above CDF in the form of an infinite sum of Lauricella functions of the fourth kind is given below \cite[Eqn. (20)]{kumar2017outage}. We assume that the source and interferers undergo $\kappa-\mu$ shadowed fading with parameters $\mathlarger{(\kappa,\mu,m,\bar{x})}$ and $\mathlarger{\{(\kappa_i,\mu_i,m_i,\bar{y_i});i=1,\cdots,N\}}$ respectively, where $N$ is the number of interferers and $\bar{x},\{\bar{y_i};i=1,\cdots,N\}$ are the expectations of the corresponding RVs. Then the CDF of $\gamma$ is given by

\begin{equation}
\begin{aligned}
	F_{\gamma}(z) = & 1 - K_1 \sum_{p=0}^{\infty} \frac{(m)_p\left(1-\frac{\theta}{\lambda}\right)^p\Gamma\left[\sum\limits_{i=1}^{N}\mu_i+\mu+p\right]}{(\mu)_pp!} F_D^{(2N)}(1-p-\mu,\mu_1-m_1,\cdots,\mu_N-m_N,\\
	& m_1,\cdots,m_N;1+\sum_{i=1}^{N}\mu_i;\frac{\theta}{\theta+z\theta_1},\cdots,\frac{\theta}{\theta+z\theta_N},\frac{\theta}{\theta+z\lambda_1},\cdots,\frac{\theta}{\theta+z\lambda_N}),
\end{aligned}
\label{cdf1}
\end{equation}
where $\mathlarger{K_1=\frac{\prod\limits_{i=1}^{N}\left(\left(\frac{\theta}{\theta+z\theta_i}\right)^{\mu_i-m_i}\left(\frac{\theta}{\theta+z\lambda_i}\right)^{m_i}\right)\theta^m}{\Gamma\left[\sum\limits_{i=1}^{N}\mu_i+1\right]\Gamma[\mu]\lambda^m}}$, $\mathlarger{(m)_p=\frac{\Gamma[m+p]}{\Gamma[m]}}$ is the Pochhammer symbol and 
\begin{equation}
    \mathlarger{\theta = \frac{\bar{x}}{\mu_(1+\kappa)}}, \ \mathlarger{\theta_i = \frac{\bar{y_i}}{\mu_i(1+\kappa_i)}}, \  \mathlarger{\lambda = \frac{(\mu\kappa+m)\bar{x}}{\mu_(1+\kappa_)}, \lambda_i = \frac{(\mu_i\kappa_i+m_i)\bar{y_i}}{\mu_i(1+\kappa_i)m_i}}.
    \label{thetalambda}
\end{equation}
	To show that $F_{\gamma}(z)$ belongs to the \textit{MDA} of the Frechet distribution, we first do some simplifications of the CDF in (\ref{cdf1}). This CDF expression has a Lauricella function of the fourth kind ($F_D^{(2N)}(.)$), which has the following series expansion \cite{exton1976multiple}:
	\begin{equation}
		F_D^{(N)}(a,b_1,\cdots,b_N;c;x_1,\cdots,x_N) = \sum_{p_1,\cdots,p_N=0}^{\infty}\frac{(a)_{p_1+\cdots,p_N}}{(c)_{p_1+\cdots+p_N}}\prod_{i=1}^{N}(b_i)_{p_i}\frac{x_i^{p_i}}{p_i!}.
		\label{fd_expand}
	\end{equation}
	Substituting (\ref{fd_expand}) in (\ref{cdf1}), we get an expanded form for the CDF of $\gamma$ as given in (\ref{ccdf1}). Further, by rewriting the inner $2N$ fold summation in (\ref{ccdf1}) as two separate terms one with $p_1=\cdots=p_{2N}=0$ and the second with rest of the terms,we get (\ref{ccdf1_sum_expand}).
	
	\begin{equation}
	\begin{aligned}
	&F_{\gamma}(z)= 	1-\frac{\prod\limits_{i=1}^{N}\left(\frac{\theta}{\theta+z\theta_i}\right)^{\mu_i-m_i}\left(\frac{\theta}{\theta+z\lambda_i}\right)^{m_i}\theta^m}{\Gamma\left[\sum\limits_{i=1}^{N}\mu_i+1\right]\Gamma[\mu]\lambda^m}\times \sum_{p=0}^{\infty}\frac{(m)_p\left(1-\frac{\theta}{\lambda}\right)^p}{(\mu)_pp!}\Gamma\left[\sum_{i=1}^{N}\mu_i+p+\mu\right] \times\\
	& \sum_{p_1,\cdots,p_{2N}=0}^{\infty}\frac{(1-p-\mu)_{p_{1}+\cdots+p_{2N}}\prod\limits_{i=1}^{N}\left(\mu_i-m_i\right)_{p_{i}}(m_i)_{p_{i+N}}}{\left(1+\sum\limits_{i=1}^{N}\mu_i\right)_{p_1+\cdots+p_{2N}}}\prod_{i=1}^{N}\frac{\left(\frac{\theta}{\theta+z\theta_i}\right)^{p_i}\left(\frac{\theta}{\theta+z\lambda_i}\right)^{p_{i+N}}}{p_i!p_{i+N}!}.
	\end{aligned}
	\label{ccdf1}
	\end{equation}

	\begin{equation}
	\begin{aligned}
    	&1-F_{\gamma}(z) = K_2\sum_{p=0}^{\infty}C_1\left \lbrace C_2z^{-\sum\limits_{i=1}^{N}\mu_{i}}\prod_{i=1}^{N}\left(\frac{\theta}{z}+\theta_i\right)^{-(\mu_{i}-m_i)}\left(\frac{\theta}{z}+\lambda_i\right)^{-m_i}+ \right. \\ & \left. \sum\limits_{\underset{\mathlarger{s.t \ \exists \ p_{i_1}\neq0;
	\ i_1 \in \{1,\cdots,2N\}}}{p_1\cdots p_{2N}=0 ,}  }^{\infty}C_2z^{-\sum\limits_{i=1}^{N}\mu_{i}}z^{-\sum\limits_{i=1}^{N}(p_i+p_{i+N})}\prod\limits_{i=1}^{N}\frac{\left(\frac{\theta}{z}+\theta_i\right)^{m_i-\mu_i-p_i}\left(\frac{\theta}{z}+\lambda_i\right)^{-m_i-p_{i+N}}}{p_i!p_{i+N}!}\right \rbrace,
		\end{aligned}
		\label{ccdf1_sum_expand}
	\end{equation}
 where $\mathlarger{K_2=\frac{\left(\sfrac{\theta}{\lambda}\right)^m}{\Gamma\left[1+\sum\limits_{i=1}^{N}\mu_i\right]\Gamma
		[\mu]}}$, $\mathlarger{C_1=\frac{(m)_p\left(1-\frac{\theta}{\lambda}\right)^p}{(\mu)_pp!}\Gamma\left[\sum\limits_{i=1}^{N}\mu_i+p+\mu\right]}$ and \\ $\mathlarger{C_2=\frac{(1-p-\mu)_{p_{1}+\cdots+p_{2N}}\prod\limits_{i=1}^{N}\left(\mu_i-m_i\right)_{p_{i}}(m_i)_{p_{i+N}}}{\left(1+\sum\limits_{i=1}^{N}\mu_i\right)_{p_1+\cdots+p_{2N}}}\theta^{\sum\limits_{i=1}^{N}\mu_i+p_i+p_{i+N}}}$. After further rearrangement of the terms, we obtain,
	\begin{equation}
	\begin{aligned}
	1-F_{\gamma}(z) & = K_2\sum_{p=0}^{\infty}C_1 \left\lbrace C_2z^{-\sum\limits_{i=1}^{N}\mu_{i}}\prod_{i=1}^{N}\left(\frac{\theta}{z}+\theta_i\right)^{-(\mu_{i}-m_i)}\left(\frac{\theta}{z}+\lambda_i\right)^{-(m_i)}+ 
	z^{-\sum\limits_{i=1}^{N}\mu_{i} }h(z)\right\rbrace,
	\end{aligned}
	\label{ccdf5}
	\end{equation}
	where $\mathlarger{h(z)=\sum\limits_{\underset{\mathlarger{s.t \ \exists \ p_{i_1}\neq0;
	\ i_1 \in \{1,\cdots,2N\}}}{p_1\cdots p_{2N}=0 ,}  }^{\infty}C_2z^{-\sum\limits_{i=1}^{N}(p_i+p_{i+N})}\prod\limits_{i=1}^{N}\frac{\left(\frac{\theta}{z}+\theta_i\right)^{m_i-\mu_i-p_i}\left(\frac{\theta}{z}+\lambda_i\right)^{-m_i-p_{i+N}}}{p_i!p_{i+N}!}}$.\\ Now, let us focus on the term $h(z)$. As $\mathlarger{z\to\infty}$,  $\mathlarger{z^{-\sum\limits_{i=1}^{N}(p_i+p_{i+N})}}$ tends to zero and the product $\mathlarger{\prod\limits_{i=1}^{N}\frac{\left(\frac{\theta}{z}+\theta_i\right)^{m_i-\mu_i-p_i}\left(\frac{\theta}{z}+\lambda_i\right)^{-m_i-p_{i+N}}}{p_i!p_{i+N}!}}$ will tend to a finite and positive value, which is  $\mathlarger{\frac{\prod\limits_{i=1}^{N}\theta_i^{m_i-\mu_i-p_i}\lambda_i^{-m_i-p_{i+N}}}{p_i!p_{i+N}!}}$. Hence $\mathlarger{\lim\limits_{z\to\infty} h(z)=0}$. 
		Recall from (\ref{codn_frec1}) that the condition for a CDF $F$ to belong to the MDA for Frechet distribution is 
		 \begin{equation}
	 	\lim_{t\to\infty}\frac{1-F(tz)}{1-F(t)} \ = \ z^{-\beta}.
	 	\label{codn_frec}
	 \end{equation} 
Substituting (\ref{ccdf5}) in the left hand side (LHS) of the above relation, we have,
\begin{equation}
\lim_{t\to\infty} \frac{K_2\sum\limits_{p=0}^{\infty}C_1\{C_2(tz)^{-\sum\limits_{i=1}^{N}\mu_{i}}\prod\limits_{i=1}^{N}(\frac{\theta}{tz}+\theta_i)^{-(\mu_{i}-m_i)}(\frac{\theta}{tz}+\lambda_i)^{-(m_i)}+ 
	(tz)^{-\sum\limits_{i=1}^{N}\mu_{i} }h(tz) \}}{K_2\sum\limits_{p=0}^{\infty}C_1\{C_2(t)^{-\sum\limits_{i=1}^{N}\mu_{i}}\prod\limits_{i=1}^{N}(\frac{\theta}{t}+\theta_i)^{-(\mu_{i}-m_i)}(\frac{\theta}{t}+\lambda_i)^{-(m_i)}+ 
	(t)^{-\sum\limits_{i=1}^{N}\mu_{i} }h(t) \}}.
\end{equation}
Cancelling the terms common to both numerator and denominator, we obtain,  
\begin{equation}
\begin{aligned}
\lim_{t\to\infty} & \frac{z^{-\sum\limits_{i=1}^{N}\mu_{i}} K_2\sum\limits_{p=0}^{\infty}C_1\{C_2\prod\limits_{i=1}^{N}(\frac{\theta}{tz}+\theta_i)^{-(\mu_{i}-m_i)}(\frac{\theta}{tz}+\lambda_i)^{-(m_i)}+ 
	h(tz) \}}{K_2\sum\limits_{p=0}^{\infty}C_1\{C_2\prod\limits_{i=1}^{N}(\frac{\theta}{t}+\theta_i)^{-(\mu_{i}-m_i)}(\frac{\theta}{t}+\lambda_i)^{-(m_i)}+ 
	h(t) \}}.
	\label{mda_simpligy}
\end{aligned}
\end{equation}
 Since we have already proven that $\mathlarger{\lim\limits_{z\to\infty}}$ $\mathlarger{h(z)=0}$, (\ref{mda_simpligy}) evaluates to $\mathlarger{z^{-\sum\limits_{i=1}^{N}\mu_{i}}}$.

	\section{Derivation of rate of convergence} \label{rateofconv}
	To derive the result in Theorem \ref{rate_cnvg}, we first define the $\delta$-neighborhood of GPD for a Frechet RV. Let the $\delta$-neighbourhood be denoted by $Q_1(\delta)$ and the GPD for a Frechet RV be denoted by $W_{\{1,\beta\}}$. The Extreme Value Distributions (EVDs) lies in the $\delta$ neighbourhood of one of three GPD $W_{\{i,\beta\}}; \  i=1,2,3$ with $\delta=1$ .
	
\theoremstyle{definition}
	\begin{defn}$\delta$-neighborhood $Q_1(\delta)$ of the GPD  $W_{\{1,\beta\}}$ \cite{falk2010laws} is defined as
		$Q_1(\delta)$ := \{F\ : \ $\omega$(F)\ =\ $\infty\}$ and $F$ has a density $f$ on $[z_0,\infty]$ for some $z_0>0$ such that for some shape parameter $\beta >0$ and some scale parameter $a>0$  on  $[z_0,\infty]$, we have,
		\begin{equation}
		f(z)=\frac{1}{a}W'_{1,\beta}\left(\frac{z}{a}\right)(1+O((1-W_{1,\beta}(z))^\delta) \},
		\label{pdf_condition}
		\end{equation}  
		where $\mathlarger{\omega(F):= sup \{z \in \mathbb{R} : F(z)<1 \}}$.
		In fact the GPD for the Frechet distribution is defined in \cite{falk2010laws} as $\mathlarger{W_{1,\beta} = 1 - z^{-\beta}; z \geq1}$ and using this, (\ref{pdf_condition}) can be rewritten as 
	\begin{equation}
	f(z) = \frac{\beta}{a}\left( \frac{z}{a}\right)^{-\beta-1}\left( 1+O((z^{-\beta})^\delta)\right).
	\label{fz_with_gpd}
	\end{equation}
	\end{defn}
		This definition says that, if a PDF $f$ on  $[z_0,\infty]$ for some $z_0>0$ can be written in the form of (\ref{fz_with_gpd}), then the corresponding CDF $F$ belongs to the $\delta$-neighborhood $Q_{1}(\delta)$ of the Frechet distribution\footnote{For a  real or complex valued function $g_1(x)$ and a strictly positive real valued function $g_2(x)$ both defined on some unbounded subset of $\mathbb{R}^+$, we say $g_1(x)=O(g_2(x))$, iff $\exists$ $M\in \mathbb{R}^+$ and $x_0 \in \mathbb{R}$ such that, $|g_1(x)| \leq Mg_2(x) $ $\forall x \geq x_0$.}. The PDF of the SIR RV at the $j^{th}$ antenna is given by,
		
\begin{equation}
\begin{aligned}
	f_{\gamma}(z) =  K_5 z^{-\left(1+\sum\limits_{i=1}^{N}\mu_{i}\right)} \left(1+\frac{\theta}{z \theta_1}\right)^{-\left(\mu+\sum\limits_{i=1}^{N}\mu_{i}\right)}\times _{(1)}^{(1)}E_D^{(2N)}\left[\mu+\sum_{i=1}^{N}\mu_{i},m,\mu_2-m_2,\cdots, \right.\\ \left. \mu_N-m_N,m_1,\cdots,m_N;\mu,\sum_{i=1}^{N}\mu_{i};
	\frac{z\theta_1(\lambda-\theta)}{\lambda(\theta+z\theta_1)},\frac{\theta(\theta_2-\theta_1)}{\theta_2(\theta+z\theta_1)}, \cdots,\frac{\theta(\theta_N-\theta_1)}{\theta_N(\theta+z\theta_1)}, \frac{\theta(\lambda_1-\theta_1)}{\lambda_1(\theta+z\theta_1)}, \right. \\ \left. \cdots, \frac{\theta(\lambda_N-\theta_1)}{\lambda_N(\theta+z\theta_1)}\right],
\end{aligned} 
\label{pdf}
\end{equation}
 where $\mathlarger{K_5 = \frac{\theta^{(m+\sum\limits_{i=1}^{N}\mu_i)}\Gamma\left[\mu+\sum\limits_{i=1}^{N}\mu_i\right]}{\lambda^m \Gamma[\mu] \Gamma\left[\sum\limits_{i=1}^{N}\mu_i\right]\prod\limits_{i=1}^{N}\theta_i^{\mu_i-m_i}\lambda_i^{m_i}}}$
(The PDF expression is not given in \cite{kumar2017outage} and is a non-trivial derivation. Hence, we derive the PDF in Appendix \ref{appendixb}).
	Also, we get the following form for the PDF $\mathlarger{f_{\gamma}(z)}$ by following the simplification steps given in Appendix \ref{appendixc}: 
	\begin{equation}
	f_{\gamma}(z) = K_6 z^{-(1+\sum\limits_{i=1}^{N}\mu_i)} \left(1+\frac{(\mu+\sum\limits_{i=1}^{N}\mu_i)m(\lambda-\theta)}{\mu\lambda}\right) (1+O(z^{-1})). 
	\label{fz_sir1}
	\end{equation}
	This is in the same form as that of (\ref{fz_with_gpd}). 
	Comparing (\ref{fz_sir1}) with (\ref{fz_with_gpd}), we can identify that the CDF $\mathlarger{F_{\gamma}(z)}$ belongs to the $\delta$ neighborhood of $Q_{1}(\delta)$ with $\mathlarger{\delta = \left(\sum\limits_{i=1}^{N}\mu_{i} \right)^{-1}}$ and $\mathlarger{\beta=\sum\limits_{i=1}^{N}\mu_i}$.	Now that we have identified the $\delta$ neighbourhood for $\mathlarger{F_{\gamma}(z)}$, we make use of the following lemma from \cite{falk2010laws} to conclude the proof. 
	
		\begin{lemma}
		Suppose that the CDF $F$ (of i.i.d. RVs $z_1,\cdots,z_L$) is in the $\delta$ neighborhood $Q_1(\delta)$ of the GPD $W_{1,\beta}$  then there obviously exist constant $a > 0$ such that  $\mathlarger{f(z)=\frac{1}{a}W'_{1,\beta}(\frac{z}{a})(1+O((1-}$ $\mathlarger{W_{1,\beta}(z))^\delta)} $ for all $z$ in the left neighborhood of $\mathlarger{\omega(W_{1,\beta})}$. Consequently we have, 
		
		\begin{equation}
		\sup_{B\in \mathbb{B}} \left| \mathbb{P}\left(\left(\left( \frac{M_L}{a}  \right) / L^{\beta}\right) \in B\right) - G_{\beta}(B)\right| = O\left(\left(\frac{1}{L}\right)^{\delta} + \frac{1}{L}\right),
		\end{equation}
		
	where $\mathbb{B}$ denotes the Borel $\sigma$ algebra on $\mathbb{R}$ and $M_L = max \{z_1,\cdots,z_L \} $.	
	\end{lemma}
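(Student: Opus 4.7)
The plan is to break the argument into three stages: (i) unpack the density representation from the defining property of $Q_1(\delta)$, (ii) obtain a sharp asymptotic expansion of $1 - F(z)$ by integration, and (iii) leverage this expansion to bound the total-variation distance between the law of the normalized maximum and the Frechet distribution. The first stage is essentially just invoking the definition: membership of $F$ in $Q_1(\delta)$ already provides the density representation $f(z) = \frac{1}{a} W'_{1,\beta}(z/a)(1 + O((1 - W_{1,\beta}(z))^\delta))$ on a left neighbourhood of $\omega(W_{1,\beta})$, with the constant $a>0$ furnished by the scale parameter in the definition.

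For stage (ii), I would integrate the density from $z$ to $\infty$. Using $W'_{1,\beta}(y) = \beta y^{-\beta-1}$ and $1 - W_{1,\beta}(y) = y^{-\beta}$ for $y \ge 1$, a direct computation gives
\begin{equation*}
1 - F(z) = \left(\frac{z}{a}\right)^{-\beta}\!\left(1 + O\!\left(\left(\frac{z}{a}\right)^{-\beta\delta}\right)\right), \qquad z \to \infty,
\end{equation*}
where the error term comes from integrating the $O(\cdot)$ factor of the density against the Pareto weight. This is the key analytic input; without the $\delta$-neighbourhood hypothesis, the error exponent would be unspecified.

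For stage (iii), set $M_L^\ast = M_L/(a L^{1/\beta})$. Its CDF is $F(a L^{1/\beta} x)^L$, and substituting the tail expansion from stage (ii) yields
\begin{equation*}
F(a L^{1/\beta} x)^L = \left(1 - \frac{x^{-\beta}}{L}\bigl(1 + O(L^{-\delta})\bigr)\right)^{\!L} = \exp(-x^{-\beta})\bigl(1 + O(L^{-\delta}) + O(L^{-1})\bigr),
\end{equation*}
where the final equality combines a Taylor expansion of $\log(1-u)$ with the uniform estimate $(1 - y/L)^L = e^{-y}(1 + O(L^{-1}))$ on compact sets. To promote this pointwise CDF estimate to a supremum over all Borel sets $B \in \mathbb{B}$, I would pass to densities. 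Since both laws are absolutely continuous,
\begin{equation*}
\sup_{B \in \mathbb{B}} \bigl|\mathbb{P}(M_L^\ast \in B) - G_\beta(B)\bigr| \;=\; \tfrac{1}{2}\int_0^\infty |g_L(x) - g_\beta(x)|\,dx,
\end{equation*}
so it suffices to estimate the $L^1$ distance between $g_L(x) = L \cdot a L^{1/\beta} \cdot f(a L^{1/\beta} x)\, F(a L^{1/\beta} x)^{L-1}$ and the Frechet density $g_\beta(x) = \beta x^{-\beta-1} e^{-x^{-\beta}}$. Plugging in the density representation from stage (i) and the survival-function expansion from stage (ii) shows $g_L(x) = g_\beta(x)(1 + O(L^{-\delta}) + O(L^{-1}))$ on the bulk of the support.

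The hardest step will be making this $L^1$ bound genuinely uniform: the $O(\cdot)$ constants degrade near $x = 0$ (where $g_\beta$ is super-exponentially small but the prefactor $L \cdot a L^{1/\beta}$ is large) and near $x = \infty$ (the Pareto tail regime where the density representation is sharpest but where one must control $F^{L-1}$ carefully). I would handle this by splitting the integration into three ranges—a bounded core interval $[\varepsilon, T]$ on which the multiplicative error is uniformly $O(L^{-\delta}+L^{-1})$; a small-$x$ region $(0,\varepsilon)$ where both densities are bounded by a Frechet-tail estimate of size $e^{-\varepsilon^{-\beta}}$ which is absorbed into the $O(L^{-1})$ term; and a large-$x$ Pareto tail $(T,\infty)$ bounded via the integrable envelope $x^{-\beta-1}$. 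Choosing $\varepsilon$ and $T$ to depend slowly on $L$ (for instance as powers of $\log L$) makes each contribution at most $O(L^{-\delta} + L^{-1})$, yielding the stated bound.
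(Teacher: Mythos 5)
The paper never proves this lemma itself: it is quoted from Falk, H\"usler and Reiss \cite{falk2010laws}, so the only ``paper proof'' is the citation. Your proposal essentially reconstructs the argument behind that cited result: integrate the $\delta$-neighbourhood density representation to get $1-F(z)=(z/a)^{-\beta}\bigl(1+O(z^{-\beta\delta})\bigr)$, expand $F(aL^{1/\beta}x)^L$, and upgrade the pointwise comparison to a bound over all Borel sets via Scheff\'e's identity together with a three-zone splitting of the support. That is the standard route and the skeleton is sound. Two caveats are worth fixing. First, your assertion that ``both laws are absolutely continuous'' is not granted by the hypothesis: membership in $Q_1(\delta)$ only supplies a density on $[z_0,\infty)$, so the law of $M_L$ may fail to be absolutely continuous below the threshold; this is harmless, since the offending region carries mass at most $F(aL^{1/\beta}\varepsilon)^L\approx e^{-\varepsilon^{-\beta}}$, which your small-$x$ zone already absorbs, but the Scheff\'e step should be applied only on $[\varepsilon,\infty)$ with the left tail handled by direct probability bounds on both measures. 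Second, on the core zone the multiplicative error is of the form $O(L^{-\delta}x^{-\beta\delta})+O(L^{-1}x^{-2\beta})$, which is not uniformly small near $x=\varepsilon$ once you let $\varepsilon$ shrink with $L$; the clean fix is to bound the $L^1$ error by integrating it against the Frechet density, using that $\int_0^\infty g_\beta(x)\,x^{-c}\,dx<\infty$ for every $c>0$ because $g_\beta(x)=\beta x^{-\beta-1}e^{-x^{-\beta}}$ vanishes faster than any power at the origin; this yields $O(L^{-\delta}+L^{-1})$ without logarithmic losses and without tuning $\varepsilon$ and $T$ delicately. Finally, note that your norming $M_L/(aL^{1/\beta})$ is the correct one for the Frechet case (consistent with $a_L=F_\gamma^{-1}(1-L^{-1})$); the ``$/L^{\beta}$'' printed in the lemma statement is a typo for $L^{1/\beta}$, so your silent correction is appropriate.
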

	 Since the CDF $F_{\gamma}(z)$ belongs to the $\delta$ neighborhood of $Q_1(\delta)$, by the previous lemma, the rate of convergence is $O\left(\left(\frac{1}{L}\right)^{\delta} + \frac{1}{L}\right)$ with $\delta = \left(\sum\limits_{i=1}^{N}\mu_{i} \right)^{-1}$ .
\section{Derivation of PDF of SIR random variable} \label{appendixb}
	 Let $Z=\frac{X}{Y}$, $Y=\sum\limits_{i=1}^{N}Y_i$. Here, $X$ and $Y_i$ are $\kappa-\mu$ shadowed RVs with parameters $\{\kappa,\mu,m,\bar{x}\}$ and $\{\kappa_i,\mu_i,m_i,\bar{y_i}\}$ respectively. Then by the method of transformation of RVs, the PDF of $Z$ can be expressed as $f_Z(z) = \int\limits_{0}^{\infty} yf_X(yz)f_Y(y) \ dy,$ where $f_X(x)$ and $f_Y(y)$ represent the PDFs of $X$ and $Y$ respectively. The expression for the PDF of $X$ is given in (\cite[Eqn. (4)]{paris2014statistical}). The PDF of sum of i.n.i.d.  $\kappa-\mu$ shadowed RVs is given in \cite{paris2014statistical} as follows, 
	 
\begin{equation}
\begin{aligned}
f_Y(y)	 = \frac{y^{\sum\limits_{i=1}^{N}\mu_{i}-1}}{\Gamma\left[\sum\limits_{i=1}^{N}\mu_{i}\right]\prod\limits_{i=1}^{N}\left(\theta_i^{\mu_{i}-m_i} \lambda_i^{m_i}\right)}\phi_2^{(2N)}\left(\mu_1-m_1,\cdots,\mu_N-m_N,m_1,\cdots,m_N; \right. \\ \left.
\sum_{i=1}^{N}\mu_{i};-\frac{y}{\theta_1},\cdots,-\frac{y}{\theta_N},-\frac{y}{\lambda_1},\cdots,-\frac{y}{\lambda_N}\right),
\end{aligned}
\end{equation} 
where $\mathlarger{\theta_i = \frac{\bar{y_i}}{\mu_i(1+\kappa_i)}}$, $\mathlarger{\lambda_i = \frac{(\mu_i\kappa_i+m_i)\bar{y_i}}{\mu_i(1+\kappa_)m_i}}$ for $\mathlarger{i=1,\cdots,N}$ and $\mathlarger{\phi_2^{(2N)}(.)}$ is the confluent multivariate hypergeometric function of $2N$ variables. Substituting the pdfs of $X$ and $Y$ in the expression for $f_Z(z)$, we obtain,

\begin{equation}
	\begin{aligned}
	f_Z(z)& =  \ K_3 \int\limits_{0}^{\infty}y^{\mu+\sum\limits_{i=1}^{N}\mu_{i}-1} e^{-\frac{yz}{\theta}} \times _1F_1\left(m,\mu,\frac{yz}{\theta}-\frac{yz}{\lambda}\right) \times \\ 
	&\phi_2^{(2N)}\left(\mu_1-m_1,\cdots,\mu_N-m_N,m_1,\cdots,m_N;
	\sum_{i=1}^{N}\mu_{i};-\frac{y}{\theta_1},\cdots,-\frac{y}{\theta_N},-\frac{y}{\lambda_1},\cdots,-\frac{y}{\lambda_N}\right),
	\end{aligned}
	\label{int_pdf}
\end{equation}
where $\mathlarger{K_3 = \frac{z^{\mu-1}}{\theta^{\mu-m}\lambda^m\Gamma[\mu]\Gamma\left[\sum\limits_{i=1}^{N}\mu_{i}\right]}\frac{1}{\prod\limits_{i=1}^{N}\theta_i^{\mu_{i}-m_i} \lambda_i^{m_i} }}$. We use the following integral identity from \cite{exton1976multiple} to simplify the integral expression  in (\ref{int_pdf}):

\begin{equation}
	\begin{aligned}
	\Gamma[a]_{(1)}^{(1)}E_D^{(N)}[a,b_1,\cdots,b_N;c,c';x_1,\cdots,x_N] = \int\limits_{0}^{\infty}e^{-t}t^{a-1}\phi_2^{(k)}[b_1,\cdots,b_k;c;x_1t,\cdots,x_kt] \\ \times \phi_2^{(N-k)}[b_{k+1},\cdots,b_{N};c';x_{k+1}t,\cdots,x_Nt] \ dt.
	\end{aligned}
	\label{ed_int}
\end{equation}

Now, the PDF of $Z$ is given by,
\begin{equation}
\begin{aligned}
	f_Z(z) = K_4 \times_{(1)}^{(1)}E_D^{(2N+1)}\left[\mu+\sum_{i=1}^{N}\mu_{i},m,\mu_1-m_1,\cdots,\mu_N-m_N,m_1, \cdots,m_N;\mu,\sum_{i=1}^{N}\mu_{i}; \right. \\ \left. 1-\frac{\theta}{\lambda},-\frac{\theta}{z\theta_1}, \cdots,-\frac{\theta}{z\theta_N},-\frac{\theta}{z\lambda_1},\cdots,-\frac{\theta}{z\lambda_N}\right],
\end{aligned}
\end{equation}
where $\mathlarger{K_4 = K_3\Gamma\left[\mu+\sum\limits_{i=1}^{N}\mu_{i}\right]}$.
Now, we make use of the following transformation from  \cite{exton1976multiple} to get a converging form for the PDF: $^{(1)}_{(1)}E^{(N)}_{D}(a,b_1,\cdots,b_N;c,c';x_1,\cdots,x_N)=(1-x_2)^{-a} \times ^{(1)}_{(1)}E^{(N)}_{D}(a,b_1,c'-b_{2}-\cdots-b_N,b_{3},\cdots,b_N;c,c';\frac{x_1}{1-x_2},\frac{x_2}{x_2-1},\frac{x_2-x_3}{x_3-1},\cdots,\frac{x_2-x_N}{x_2-1}).$

Hence, we obtain the following expression:
\begin{equation}
\begin{aligned}
f_Z(z) =  K_5 z^{-(1+\sum\limits_{i=1}^{N}\mu_{i})} \left(1+\frac{\theta}{z \theta_1}\right)^{-\left(\mu+\sum\limits_{i=1}^{N}\mu_{i}\right)}\times _{(1)}^{(1)}E_D^{(2N)}\left[\mu+\sum_{i=1}^{N}\mu_{i},m,\mu_2-m_2,\cdots, \right.\\ \left. \mu_N-m_N,m_1,\cdots,m_N;\mu,\sum_{i=1}^{N}\mu_{i};
\frac{z\theta_1(\lambda-\theta)}{\lambda(\theta+z\theta_1)},\frac{\theta(\theta_2-\theta_1)}{\theta_2(\theta+z\theta_1)}, \cdots,\frac{\theta(\theta_N-\theta_1)}{\theta_N(\theta+z\theta_1)}, \frac{\theta(\lambda_1-\theta_1)}{\lambda_1(\theta+z\theta_1)}, \right. \\ \left. \cdots, \frac{\theta(\lambda_N-\theta_1)}{\lambda_N(\theta+z\theta_1)}\right],
\end{aligned} 
\label{pdf_exp}
\end{equation}
where $\mathlarger{K_5 = \frac{\theta^{(m+\sum\limits_{i=1}^{N}\mu_i)}\Gamma\left[\mu+\sum\limits_{i=1}^{N}\mu_i\right]}{\lambda^m \Gamma[\mu] \Gamma\left[\sum\limits_{i=1}^{N}\mu_i\right]\prod\limits_{i=1}^{N}\theta_i^{\mu_i-m_i}\lambda_i^{m_i}}}$.

\section{Simplification of PDF to identify $\delta$  neighbourhood} \label{appendixc}
To begin with, the PDF $f_{\gamma}(z)$ given by (\ref{pdf}) is rewritten as in Eqn. \ref{pdf_sir} where $\mathlarger{z_1 = \frac{(\lambda-\theta)z\theta_1}{\lambda(z\theta_1+\theta)}}$, $\mathlarger{z_i=\frac{\theta(\theta_i-\theta_1)}{\theta_i(\theta+z\theta_1)}}$ for $\mathlarger{i \in \{2,\cdots N\}}$ and $\mathlarger{z_i=\frac{\theta(\lambda_i-\theta_1)}{\lambda_i(\theta+z\theta_1)}}$ for  $\mathlarger{i \in \{N+1,\cdots, 2N\}}.$ The $\mathlarger{E_D^{(2N)}(.)}$ term in this expression has the following series expansion from \cite{exton1976multiple}: 
	\begin{equation}
	    _{(1)}^{(1)}E_D^{(N)}[a,b_1,\cdots,b_N;c,c';x_1,\cdots,x_N] = \sum\limits_{p_1,\cdots,p_N=0}^{\infty}\frac{(a)_{p_1+\cdots+p_N}\prod\limits_{i=1}^{N}(b_i)_{p_i}\prod\limits_{i=1}^{N}x_i^{p_i}}{(c)_{p_1}(c')_{p_2+\cdots+p_N}p_1!\cdots p_N!}.
	\end{equation}
	\begin{equation}
	\begin{aligned}
	&	f_{\gamma}(z) = \frac{z^{-\left(1+\sum\limits_{i=1}^{N}\mu_i\right)}\theta^{\left(m+\sum\limits_{i=1}^{N}\mu_i \right)}\Gamma\left[\mu+\sum\limits_{i=1}^{N}\mu_i\right]}{\lambda^m \Gamma[\mu] \Gamma\left[\sum\limits_{i=1}^{N}\mu_i\right]\prod\limits_{i=1}^{N}\theta_i^{\mu_i-m_i}\lambda_i^{m_i}}\left(1+\frac{\theta}{z\theta_1}\right)^{-\left(\mu+\sum\limits_{i=1}^{N}\mu_i\right)}  \\
	& \times_{(1)}^{(1)}E_D^{(2N)}\left(\mu+\sum_{i=1}^{N}\mu_i,m,\mu_2-m_2,\cdots,\mu_N-m_N,m_1,\cdots,m_N;\mu,\sum_{i=1}^{N}\mu_i;z_1,\cdots,z_{2N}\right).
	\end{aligned}
	\label{pdf_sir}
	\end{equation}
	
	Using the above series expansion, we rewrite (\ref{pdf_sir}) as 
	\begin{equation}
	\begin{aligned}
	f_{\gamma}(z) = & K_6z^{-\left(1+\sum\limits_{i=1}^{N}\mu_i\right)} \left(1+\frac{\theta}{z\theta_1}\right)^{-\left(\mu+\sum\limits_{i=1}^{N}\mu_i\right)}\sum\limits_{p_1,\cdots,p_{2N}=0}^{\infty}\frac{\left(\mu+\sum\limits_{i=1}^{N}\mu_i\right)_{p_1+\cdots+p_{2N}}}{(\mu)_{p_1}}  \\ 
	&  \times \frac{(m)_{p_1}  \prod\limits_{i=2}^{N}(\mu_i-m_i)_{p_i}  \prod\limits_{i=N+1}^{2N}(m_i)_{p_i} }{\left(\sum\limits_{i=1}^{N}\mu_i\right)_{p_2+\cdots+p_{2N}}}\prod_{i=1}^{2N}\frac{z_i^{p_i}}{p_i!},
	\end{aligned}
	\label{pdf_series}
	\end{equation}
	where $\mathlarger{K_6 := \frac{\theta^{\left(m+\sum\limits_{i=1}^{N}\mu_i\right)}\Gamma\left[\mu+\sum\limits_{i=1}^{N}\mu_i\right]}{\lambda^m \Gamma[\mu] \Gamma\left[\sum\limits_{i=1}^{N}\mu_i\right]\prod\limits_{i=1}^{N}\theta_i^{\mu_i-m_i}\lambda_i^{m_i}}}$.
We then expand the $2N$ fold summation in (\ref{pdf_series}) into three terms: the first term with all the iterating variables $\mathlarger{p_1, p_2,..., p_{2N}}$ taking the value zero, the second term with exactly one non-zero iterating variable and the third term with the rest. By expanding, (\ref{pdf_series}) becomes the expression given in (\ref{fz_expand}) where $\rho = \mu+\sum\limits_{i=1}^{N}\mu_i$. The term $\mathlarger{\frac{1}{\theta+z\theta_1}}$ present in $\mathlarger{Term \ a}$ and $\mathlarger{Term \ b}$ of (\ref{fz_expand}) has the following converging series expansion: 
	\begin{equation}
	\frac{1}{z\theta_1+\theta} = \frac{1}{\theta} \left\lbrace \frac{1}{z\theta_1/\theta} - \frac{1}{(z\theta_1/\theta)^2} + \frac{1}{(z\theta_1/\theta)^3} - \frac{1}{(z\theta_1/\theta)^4} + O(z^{-5})\right\rbrace .
	\label{series1}
	\end{equation}
	Using (\ref{series1}), $Term \ a$  can be represented as 
	\begin{equation}
	\frac{(\mu+\sum\limits_{i=1}^{N}\mu_i)m(\lambda-\theta)}{\mu\lambda} \left\lbrace 1 -  \frac{1}{z\theta_1/\theta} + \frac{1}{(z\theta_1/\theta)^2} - \frac{1}{(z\theta_1/\theta)^3} + O(z^{-4}) \right\rbrace.
	\end{equation}
	Similarly, $Term \ b$ can also be expanded to get a series expression, but the expansion will have only negative powers of $z$. $Term \ 3$ will also have only powers of $z$ less than 1. Combining these series expansions, the SIR PDF can be finally expressed as, 
	\begin{equation}
	f_{\gamma}(z) = K_6 z^{-(1+\sum\limits_{i=1}^{N}\mu_i)} \left(1+\frac{(\mu+\sum\limits_{i=1}^{N}\mu_i)m(\lambda-\theta)}{\mu\lambda}\right) (1+O(z^{-1})). 
	\label{fz_sir}
	\end{equation}
	
\begin{equation}
\begin{aligned}
f_{\gamma}(z) = & \  \left(K_6z^{-\left(1+\sum\limits_{i=1}^{N}\mu_i\right)} \left(1+\frac{\theta}{z\theta_1}\right)^{-\rho}\right) \left\lbrace  \underbrace{1}_\textit{Term 1}+  \right. \\ &  \left.   \underbrace{\underbrace{\frac{\rho m (\lambda-\theta)z\theta_1}{\mu \lambda(z\theta_1 + \theta)}}_\textit{Term a} + \sum\limits_{k=2}^{N} \underbrace{ \frac{\rho(\mu_k-m_k) \theta(\theta_k-\theta_1)}{\sum\limits_{i=1}^{N}\mu_i \theta_k(\theta+z\theta_1)}+\sum\limits_{k=N+1}^{2N}\frac{\rho(m_k)\theta(\lambda_k-\theta_1)}{\sum\limits_{i=1}^{N}\mu_i \lambda_k(\theta+z\theta_1)}}_\textit{Term b} \ }_\textit{Term 2}    \right. \\ & \left.   + \underbrace{  \sum\limits_{\underset{\mathlarger{ \exists \ i_1,i_2 s.t \ p_{i_1}p_{i_2}\neq0 \ \forall \ i_1\neq i_2}}{p_1,\cdots,p_{2N}=0;}}^{\infty} \frac{(\rho)_{p_1+\cdots+p_{2N}}}{(\mu)_{p_1}}\frac{(m)_{p_1}  \prod\limits_{i=2}^{N}(\mu_i-m_i)_{p_i}  \prod\limits_{i=N+1}^{2N}(m_i)_{p_i} \prod\limits_{i=1}^{2N}\frac{z_i^{p_i}}{p_i!} }{\left(\sum\limits_{i=1}^{N}\mu_i\right)_{p_2+\cdots+p_{2N}}}}_\textit{Term 3} \right\rbrace .
\end{aligned}
\label{fz_expand}
\end{equation}

\end{appendices}
\bibliographystyle{IEEEtran}
\bibliography{library}
\end{document}